\newcommand{\bk}[2]{\left< #1 \vphantom{#2} \right|\left. #2 \vphantom{#1} \right>} 
\newcommand{\ha}{\hat{a}}
\newcommand{\ad}{\hat{a}^{\dagger}}
\newcommand{\e}{\mathrm{e}}
\newcommand{\hnl}{\hat{H}_{\mathrm{NL}}}
\newcommand{\eps}{\varepsilon}
\DeclareMathOperator{\sech}{sech}
\newtheorem{theo}{Theorem}
\newtheorem{lem}{Lemma}
\newtheorem{defi}{Definition}
\newtheorem{prob}{Problem}
\begin{document}


\title{Complexity of Gaussian Quantum Optics with a Limited Number of Non-Linearities}


\author{Michael G. Jabbour}
\email{mjabbour@telecom-sudparis.eu}
\thanks{Corresponding author}
\affiliation{SAMOVAR, T\'el\'ecom SudParis, Institut Polytechnique de Paris, 91120 Palaiseau, France}
\affiliation{Centre for Quantum Information and Communication, \'Ecole polytechnique de Bruxelles, CP 165/59, Universit\'e libre de Bruxelles, 1050 Brussels, Belgium}
\affiliation{Department of Physics, Technical University of Denmark, 2800 Kongens Lyngby, Denmark}

\author{Leonardo Novo}
\email{leonardo.novo@inl.int}
\affiliation{International Iberian Nanotechnology Laboratory (INL), Av. Mestre Jos\'e Veiga, 4715-330 Braga, Portugal}
\affiliation{Centre for Quantum Information and Communication, \'Ecole polytechnique de Bruxelles, CP 165/59, Universit\'e libre de Bruxelles, 1050 Brussels, Belgium}


\begin{abstract}
    It is well known in quantum optics that any process involving the preparation of a multimode Gaussian state, followed by a Gaussian operation and Gaussian measurements, can be efficiently simulated by classical computers. Here, we provide evidence that computing transition amplitudes of Gaussian processes with a single-layer of non-linearities is hard for classical computers. To this end, we show how an efficient algorithm to solve this problem could be used to efficiently approximate outcome probabilities of a Gaussian boson sampling experiment. We also extend this complexity result to the problem of computing transition probabilities of Gaussian processes with two layers of non-linearities, by developing a Hadamard test for continuous-variable systems that may be of independent interest. Given recent experimental developments in the implementation of photon-photon interactions, our results may inspire new schemes showing quantum computational advantage or algorithmic applications of non-linear quantum optical systems realizable in the near-term.
\end{abstract}

\maketitle

\section{Introduction.}

Gaussian quantum information processing is a generic term referring to any process involving Gaussian states undergoing Gaussian operations and followed by a Gaussian measurement~\cite{Weedbrook2012}. This particular way of processing quantum states is attractive not only from the theoretical point of view, but also because it is generally easier to realize experimentally. It is known, however, that any amplitude or outcome probability in this framework can be computed efficiently using classical computers ~\cite{sim_Gaussian}. Purely Gaussian quantum processes are thus not interesting for the purpose of achieving quantum computational advantage~\cite{hangleiter2023computational}.

Gaussian boson sampling (GBS) is arguably the most well-known approach for a quantum optical experiment involving Gaussian states that is believed to be hard to simulate using classical computers~\cite{Hamilton2017}. Inspired by the seminal work of Aaronson and Arkhipov~\cite{aaronson2011computational}, it has been shown, using complexity-theoretic arguments, that the existence of an efficient classical algorithm for simulating GBS is unlikely~\cite{Hamilton2017, deshpande2022GBS, grier2022bipartiteGBS}. A Gaussian boson sampler requires the preparation of a multi-mode Gaussian state, resulting from the application of a linear interferometer to a state of many single-mode squeezed states, followed by a photo-counting measurement. The complexity of the problem stems from the measurement in the Fock basis -- a non-Gaussian operation. Recent experiments implementing this protocol have already claimed to have breached the regime of classical simulability~\cite{zhong2020GBS, PhaseprogGBS_2021, madsen2022GBS}. Alternative proposals to achieve quantum advantage require non-Gaussian input states, linear interferometry and Gaussian measurements~\cite{levonBS_Gaussian_meas, chabaud2017photonadded}, or involve the implementation of Instantaneous Quantum Polynomial time (IQP) circuits~\cite{bremner2016IQP} in continuous-variable (CV) systems~\cite{douce2017continuous, douce2019probabilistic}.

An interesting prospect for future boson sampling experiments is to go beyond linear interference~\cite{nonlinear_BS}, motivated by recent experimental developments in the implementation of photon-photon interactions~\cite{exp_kirchmair2013observation, exp_chang2014quantum, exp_hacker2016photon, exp_tiarks2019photon, exp_moreno2021quantum, exp_reuer2022realization}. In particular, the addition of non-linear optical elements to boson sampling schemes may increase the applicability of these devices and their potential to solve useful computational problems. Such interaction terms are also native to other experimental platforms such as cavity QED systems or ultracold atoms where boson sampling experiments can be implemented~\cite{peropadre2016proposal, young2024atomic}.

The problem of analyzing the complexity of classically simulating boson sampling experiments with interaction terms only recently started to be addressed~\cite{maskara2022complexity}. In this work, the authors consider bosonic lattice models with interactions and study complexity transitions between easy and hard to simulate regimes, depending on the interaction strength and evolution time. In this scenario, both the input states, measurements and unitary evolution under an interacting Hamiltonian are non-Gaussian operations. In contrast, in our work, we address the natural question of whether hardness can be achieved when interactions are the \emph{only non-Gaussian operation} applied in a given quantum process, while all other operations, including state preparation and measurements, are Gaussian. More precisely, we consider Gaussian processes with $k$-layers of non-linear photon-photon interactions of the Kerr type, which we denote as GPNL($k$), focusing on the simplest cases with $k=1,2$. A difficulty in proving hardness of classically simulating such processes lies in the fact that the associated amplitudes are not directly related to well-known mathematical objects such as matrix permanents of Hafnians, whose computational complexity is well understood. Nevertheless, we obtain evidence that the problem of approximating up to an exponentially small additive error the transition amplitudes of GPNL(1) and transition probabilities of GPNL(2) is hard for classical computers. We do this by showing explicitly how to connect these problems to the problem of estimating output probabilities of Gaussian boson samplers.  One of our main proof ideas is to show that certain \emph{amplitudes} of GPNL(1) can be post-processed to obtain GBS \emph{probabilities} via a method for quantum eigenvalue estimation~\cite{somma2002, somma2019} (see Fig.~\ref{fig:equivNLBosonSamp}). Moreover, to show the complexity of probabilities of GPNL(2), we prove that they can be used to estimate amplitudes of GPNL(1). For the latter demonstration, we develop a CV version of the so-called Hadamard test~\cite{Cleve1998} involving Gaussian states and Gaussian measurements, which may be of independent interest (see Fig.~\ref{fig:CVHadamardMain}).

\section{Gaussian processes with non-linearities}

Consider an $M$-mode bosonic quantum system, whose $i$\textsuperscript{th} mode is described by an infinite-dimensional Hilbert space with field operators $\{\ha_i,\ad_i\}$. We denote the vacuum state of the system by $\ket{\boldsymbol{0}} = \ket{0}^{\otimes M}$ and define the number operator of each mode $\hat{n}_i = \ad_i \ha_i$. A Gaussian state of the system is one whose so-called Wigner function is a Gaussian probability distribution, while a Gaussian transformation is one that takes a Gaussian state to a Gaussian state~\cite{Weedbrook2012}. For an optical circuit acting on such a system, apart from Gaussian transformations, we consider non-linearities introduced by Hamiltonians of the form
\begin{align}
    H_{\mathrm{SK}} & = \sum_{i=1}^M \eta_i \ad_i \ad_i \ha_i \ha_i ~~\text{(self-Kerr interactions)}  \label{eq:selfK}, \\
    H_{\mathrm{CK}} & = \sum_{i,j=1}^M J_{ij} \ad_i \ha_i \ad_j  \ha_j ~~\text{(cross-Kerr interactions)}. \label{eq:crossK}
\end{align}
More concretely, we are interested in understanding the complexity of computing amplitudes or probabilities (strong simulation) of Gaussian processes with a fixed number $k$ of layers of non-linearities, or GPNL$(k)$  for short, which we define in the following way. 
\begin{defi}[GPNL$(k)$] 
A Gaussian process with $k$ layers of non-linearities is a quantum process involving the preparation of quantum state 
\begin{equation}
    \prod_{j=1}^k \exp(i H_{NL}^{(j)}\Delta t) U_G^{(j)}
    \ket{0}^{\otimes M}
\end{equation}
followed by a Gaussian measurement. Here, $U_G^{(j)}$ are Gaussian unitaries, $\Delta t$ is some constant evolution time  and $H_{NL}^{(j)}$ is either a self-Kerr or a cross-Kerr interaction Hamiltonian from Eqs.~\eqref{eq:selfK} and~\eqref{eq:crossK}, respectively.
\end{defi}
\noindent To argue for hardness of estimating amplitudes or probabilities of such processes, we connect this problem to that of estimating GBS probabilities, whose complexity is better understood. 

\subsection{Complexity of GBS}

Let us recall the typical set-up for a GBS experiment~\cite{Hamilton2017}, briefly reviewing some results about its complexity. Denote a squeezed state with squeezing parameter $r \in \mathbb{R}$ as $\ket{\psi_{\mathrm{SMS}}(r)} = \hat{S}_r \ket{0}$, where $\hat{S}_r = e^{r(a^2-{a^{\dagger}}^2)/2}$ is a single-mode squeezer. The input state to the GBS experiment contains $K$ squeezed states and vacua in the rest of the $M$ modes, $\ket{\Psi_{in}}=\ket{\psi_{\mathrm{SMS}}(r)}^{\otimes K }\ket{0}^{\otimes (M-K)}$.
This state passes through a linear interferometer $\hat{U}$ and the photon number distribution at the output is observed via photon number resolving detectors. The probability of observing an outcome $\ket{S}=\ket{s_1, ..., s_M}$, where each $\ket{s_i}$ is a Fock state, is given by
\begin{equation}\label{eq:P_S}
P_S=|\bra{S} \hat{U}\ket{\Psi_{in}}|^2.
\end{equation}
For certain regimes of the parameters $K$, $r$ and $M$, there is compelling evidence that the computation of such probabilities exactly or with a high precision is a hard problem~\cite{Hamilton2017, kruse2019detailed, deshpande2022GBS}. This complexity stems from the fact that such probabilities are proportional to the modulus square of matrix Hafnians. Hafnians, similiarly to matrix permanents that appear in standard boson sampling, are believed to be difficult to compute (precisely, they are \#P-hard), with the best known classical algorithms for their exact computation running in exponential time.
In fact, a recent proposal for a GBS set-up, named bipartite GBS, shows that it is possible to encode permanents of arbitrary matrices in GBS amplitudes, using properties of the Hafnian~\cite{grier2022bipartiteGBS}.

For the purposes of this work, we focus on the complexity of the problem of estimating GBS probabilities up to an exponentially small additive error, which can be stated as follows.
\begin{prob}[Approx. GBS probabilities]\label{prob:GBS}
Consider a linear interferometer $\hat{U}$ described by an $M\times M$ Haar random unitary $U$ and a number of input squeezed states $K$ with squeezing parameter $r$ such that $\bar{N}= o(\sqrt{M})$, with $\bar{N}= K \sinh^2(r)$. Moreover, consider a collision free outcome $S^*=(s_1, ..., s_M)$, with $s_i\in \{0,1\}$ and total number of photons $N=\sum_i s_i$ satisfying $N<K<M$. The problem is to compute $Q_{S^*}$, an estimation of the outcome probability $P_{S^*}$ (Eq.~\eqref{eq:P_S}), such that:
\begin{equation}
    |Q_{S^*}-P_{S^*}|\leq \exp{\left(- c N \log N- \Omega(N)\right)},
\end{equation}
for some $c\geq 1$\footnote{Throughout this work, we use standard mathematical notations describing asymptotic behavior of functions: we say $f(n) = O(g(n))$ if, for any positive constant $c$, there exists a large enough value of $n$ such that $|f(n)| \leq c g(n)$. Also, $f(n) = o(g(n))$ if, for any positive constant $c$, $|f(n)| < k g(n)$ for large enough $n$. Moreover, if $f(n) = O(g(n))$, then $g(n) = \Omega(f(n))$.}.
\end{prob}
The fact that the unitary is chosen uniformly at random from the Haar measure  is instrumental in the demonstration that this problem is $\#$P-hard on average, assuming certain conjectures~\cite{deshpande2022GBS, aaronson2011computational}. The condition on the average photon number $\bar{N}= o(\sqrt{M})$ is imposed so as to ensure that outcome events with \emph{collisions}, i.e., which can have more than one photon in a single mode, are unlikely. Moreover, the condition $K<N$ is a technical requirement to ensure that the matrices whose Hafnian needs to be computed have full rank. We refer the reader to Ref.~\cite{deshpande2022GBS} for further details on the evidence supporting the conjectures required for the hardness proof depending on the value of $c$ (for $c=6$ the evidence given is stronger than for $c=1$). We chose to include $c$ explicitly in the formulation of the problem since stronger complexity-theoretic results may be proved in the future.

\subsection{Complexity of GPNL(1) amplitudes}

Consider the problem of evaluating GPNL(1) amplitudes of the form
\begin{equation}\label{eq:At}
A_t=\bra{\Psi_{in}} \hat{U}^{\dagger} e^{i \hnl t} \hat{U} \ket{\Psi_{in}}, 
\end{equation}
as shown in Fig.~\ref{fig:equivNLBosonSamp}. We choose $\hnl$ as 
\begin{equation}\label{eq_non_linear_Ham}
    \hnl= \sum_{k=1}^{L} \eta_i \hat{n}_i^2+ \sum_{k=1}^M \mu_i \hat{n}_i, 
\end{equation}
i.e., a Hamiltonian with $L\leq M$ self-Kerr interaction terms added to a linear part in the number operators  $\hat{n}_i$. We assume, without loss of generality, that the non-linearities are placed in the first $L$ modes.  Moreover, we assume that $\eta_i$ and $\mu_i$ are real values, bounded by $|\eta_i|, |\mu_i|\leq \text{poly}(M)$. We remark that we chose to include some linear terms in the number operators in $\hnl$ (instead of absorbing it in the Gaussian transformation or the Gaussian measurement) for technical reasons related to the proof of Lemma~\ref{lemma:non_degenerate} below.

\begin{figure*}
\centering
\includegraphics[width=1.5\columnwidth]{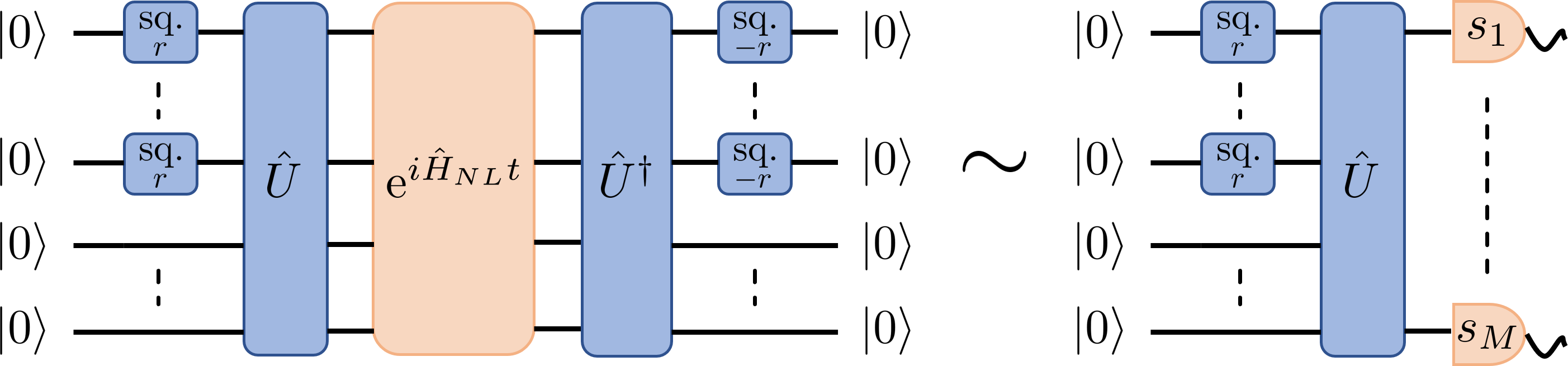}
\caption{\label{fig:equivNLBosonSamp} By computing transition amplitudes of the GPNL(1) process depicted on the left, for different evolution times $t$, we can approximate outcome probabilities of the Gaussian Boson sampler depicted on the right. The blue (darker) components represent Gaussian resources (transformations or measurements), while the orange (lighter) components represents non-Gaussian ones.}
\end{figure*}

The precise amplitude estimation problem which we connect to the problem of estimating GBS probabilities is the following.
\begin{prob}[Approx.~amplitudes~of~GPNL(1)]\label{prob:GNLG}
For a non-linear Hamiltonian $\hnl$ with $L$ non-linearities such that $L<K<M$, the problem is to estimate $A_t$ by computing a quantity $\tilde{A_t}$ such that:
\begin{equation}
    |\tilde{A}_t-A_t|\leq \exp{\left(- c L \log L- \Omega(L)\right)},
\end{equation}
for some $c\geq 1$, where $t$ is some constant independent of  $K, L $ or $M$. 
\end{prob}
\noindent We now prove the following result. 
\begin{theo}[Complexity of GPNL(1) amplitudes] \label{thm:complexity_amplitudes}
An efficient ($\text{poly}(L)$-time) classical algorithm to approximate GPNL(1) amplitudes from Problem~\ref{prob:GNLG} would imply an efficient ($\text{poly}(N)$-time) classical algorithm to approximate GBS outcome probabilities (Problem~\ref{prob:GBS}), for any $N$ such that $\bar{N}/2\leq N \leq 3 \bar{N}/2$, where $\bar{N}= K \sinh^2(r)$ is the average photon number. 
\end{theo}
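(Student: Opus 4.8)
The plan is to express the GBS probability $P_{S^*}=|\bra{S^*}\hat U\ket{\Psi_{in}}|^2$ as a function of the single-mode squeezed input passed through the interferometer and to recover it from a family of GPNL(1) amplitudes $A_t$ evaluated at suitably chosen times $t$. The central observation is that if we set the non-linear Hamiltonian to be $\hnl=\frac{\pi}{\Delta}\,\hat n_S$ for an appropriate projector-like number operator (e.g.\ $\hat n_S=\sum_{i:\,s_i=1}\hat n_i$ minus a correction, realized using the self-Kerr terms $\eta_i\hat n_i^2$ together with the linear terms $\mu_i\hat n_i$ to act as $\pi$ times a $0/1$ indicator on the relevant Fock sectors), then $e^{i\hnl t}$ becomes, on the low-photon subspace of interest, a reflection operator about the subspace spanned by the target Fock pattern. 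Averaging $A_t=\bra{\Psi_{in}}\hat U^\dagger e^{i\hnl t}\hat U\ket{\Psi_{in}}$ over the prescribed set of times then performs the quantum eigenvalue estimation / Hadamard-style extraction of the overlap $|\bra{S^*}\hat U\ket{\Psi_{in}}|^2$, exactly in the spirit of Refs.~\cite{somma2002, somma2019}. So the first step is to fix $L$, the modes carrying the non-linearity, to be precisely the $N$ modes in which $S^*$ has a photon (hence $L=N$), choose $\hat U$ and $\ket{\Psi_{in}}$ to match Problem~\ref{prob:GBS}, and verify that the resulting object is a legitimate GPNL(1) amplitude with all parameters polynomially bounded.

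Second, I would make rigorous the claim that $e^{i\hnl t}$ acts as the desired reflection. The subtlety is that $\hat n_i$ has eigenvalues $0,1,2,\dots$, not just $0,1$, so a naive $e^{i\pi\hat n_i}$ does not implement a projector; here is where the freedom in choosing $\eta_i,\mu_i$ (a quadratic in $\hat n_i$) is used, so that $e^{i\hnl t}$ equals $(-1)$ on the single-photon sector and $(+1)$ on vacuum, while higher sectors are controlled. Because of the collision-free condition $\bar N=o(\sqrt M)$ and $N<K<M$, the amplitude $\bra{S^*}\hat U\ket{\Psi_{in}}$ already lives essentially in the collision-free, bounded-photon-number regime, so the contributions of sectors where the quadratic-in-$\hat n$ trick fails are exponentially suppressed — this is the role of Lemma~\ref{lemma:non_degenerate} and the reason linear terms were kept in $\hnl$. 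Quantifying that suppression, and checking it is compatible with the target additive error $\exp(-cL\log L-\Omega(L))$ (translated from the GBS error $\exp(-cN\log N-\Omega(N))$ via $L=N$), is the main technical bookkeeping.

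Third, I would run the eigenvalue-estimation post-processing: evaluate $\tilde A_{t}$ at $O(\mathrm{poly}(N))$ values of $t$, combine them by the known interpolation/averaging formula to produce an estimate $Q_{S^*}$ of $P_{S^*}$, and propagate errors — each $\tilde A_t$ is within $\exp(-cL\log L-\Omega(L))$ of $A_t$ by hypothesis, the number of calls is polynomial, so the accumulated error stays within the bound demanded by Problem~\ref{prob:GBS} (possibly after adjusting the constant $c$, which is why $c$ was kept explicit). Since each call runs in $\mathrm{poly}(L)=\mathrm{poly}(N)$ time and there are $\mathrm{poly}(N)$ of them, the whole GBS estimator runs in $\mathrm{poly}(N)$ time, and the range $\bar N/2\le N\le 3\bar N/2$ is exactly the window in which the collision-free assumptions of Problem~\ref{prob:GBS} transfer cleanly.

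The hard part will be step two: showing that a \emph{single} layer of Kerr-type non-linearity, whose generator is only quadratic in each $\hat n_i$, genuinely reproduces (up to exponentially small error on the relevant subspace) the reflection operator that eigenvalue estimation needs, and that the error incurred on the high-photon-number tails — which are not literally zero because $\ket{\Psi_{in}}$ is a squeezed state — is small enough to meet the stringent $\exp(-cN\log N-\Omega(N))$ precision. Everything else (matching parameters, the averaging identity, linear error propagation through polynomially many queries) is routine given the results already stated.
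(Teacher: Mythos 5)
Your proposal has the right outer skeleton (evaluate $A_t$ at polynomially many times, post-process in the spirit of the time-series eigenvalue-estimation method of Refs.~\cite{somma2002,somma2019}, propagate an exponentially small per-call error), but the central mechanism you rely on does not work, and it is not the mechanism the paper uses. You want $e^{i\hnl t}$ to act, at a suitable time, as a reflection (a ``$\pi$ times a $0/1$ indicator'') that flips the sign of the target pattern $\ket{S^*}$ while leaving everything else (at least on the low-photon subspace) untouched. A Kerr-type $\hnl$ is a sum of single-mode functions of the $\hat{n}_i$, so on a Fock state it only produces the phase $e^{itE_{NL}(S)}$ with $E_{NL}(S)=\sum_i f_i(n_i)$; such phases cannot single out one pattern. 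Concretely, the phase of $S^*$ is forced by the phases of lower-occupation states, $\theta_{S^*}=\sum_{j=1}^N\theta_{e_j}-(N-1)\theta_{\mathrm{vac}}$ where $\ket{e_j}$ has one photon in mode $j$, and, worse, under your linear choice $\hat{n}_S=\sum_{i:s_i=1}\hat{n}_i$ the collision pattern $(2,1,\dots,1,0)$ is exactly degenerate with $S^*$, so no amount of averaging over times can separate the two. Your error control is also too weak for the target precision: $P_{S^*}$ is itself of order $e^{-\Theta(N\log N)}$, while the patterns your ``trick'' mis-phases (collisions, photons outside the chosen modes, higher local occupations where a quadratic in $\hat{n}$ cannot keep the indicator structure) carry weight that is only polynomially small (collisions under $\bar{N}=o(\sqrt{M})$) or even $O(1)$; the collision-free condition therefore cannot buy you the required additive error $\exp(-cN\log N-\Omega(N))$. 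You also misattribute Lemma~\ref{lemma:non_degenerate}: it is not a tail-suppression statement.

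The missing idea is exactly what the paper's two lemmas supply. One first makes all eigenvalues of $\hnl$ positive integers, so that $A_t=\sum_j p_j e^{ijt}$ is a Fourier series whose coefficient $p_j$ is a \emph{sum} of GBS probabilities over all Fock patterns of energy $j$. Lemma~\ref{lemma:non_degenerate} then designs $\hnl$ (this is where the $\hat{n}_i^2$ terms and the retained linear terms are really needed) so that $\ket{S^*}$ is the \emph{unique} Fock state, among all patterns of any photon number, with energy $j^*=N(N^2+1)$; hence $p_{j^*}=P_{S^*}$ exactly, with no approximation on the ``competitor'' patterns at all. The single probability is then extracted by the discrete Fourier average $Q_{S^*}=\frac{1}{J_{\max}}\sum_{k=0}^{J_{\max}-1}A_{2\pi k/J_{\max}}e^{-2\pi i k j^*/J_{\max}}=p_{j^*}+\sum_{j\geq1}p_{j^*+jJ_{\max}}$, and the only error is the aliasing tail $\sum_{j\geq J_{\max}}p_j$, which Lemma~\ref{lem:truncation} bounds by $e^{-cN\log N}$ via a Chernoff bound on the total photon number of the $K$ squeezers, with $J_{\max}=N^4(N+2)^2=O(N^6)$; the window $\bar{N}/2\leq N\leq 3\bar{N}/2$ is used there to guarantee $N^2(N+2)>N^*$, not to ``transfer the collision-free assumptions.'' Without a degeneracy-breaking construction of this kind (or some substitute that isolates a single Fock pattern exactly, rather than up to sector-wise suppression), your step two cannot be repaired to meet the stated precision.
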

\noindent Note that the constraint $\bar{N}/2\leq N \leq 3\bar{N}/2$ is very mild since for large $K$ and fixed squeezing $r$, the photon number distribution asymptotically converges to a Gaussian distribution with mean $\bar{N}$ and standard deviation $\sigma= \sqrt{\bar{N}}\cosh{r}$. Thus, this constraint is fulfilled with a probability tending to $1$ asymptotically.
\begin{proof}[Proof of Theorem~\ref{thm:complexity_amplitudes}]
The main idea to connect Problem~\ref{prob:GBS} and Problem~\ref{prob:GNLG} is a technique for eigenvalue estimation using the amplitudes $A_t$~\cite{somma2002, somma2019}. Let us write the spectral decomposition of $\hnl$ as $\hnl= \sum_j \lambda_j \Pi_j$,
where $\Pi_j$ is the projector on the eigenspace with eigenvalue $\lambda_j$. Furthermore, we choose $\mu_i$ and $\eta_i$ in Eq.~\eqref{eq_non_linear_Ham} to be positive integers so that the eigenvalues $\lambda_j$ are also positive integers. We also define $\Theta$ as the set of eigenvalues of $\hnl$. 
Then we can write the amplitudes
\begin{equation} \label{eq:Fourier_relation}
    A_t = \sum_{j=0}^{\infty} p_j e^{i j t},
\end{equation}
where the $p_j$'s are the probabilities of observing energy eigenvalue $j$ in the state $\hat{U} \ket{\Psi_{in}}$, i.e., $p_j = \bra{\Psi_{in}}\hat{U}^{\dagger}\Pi_j \hat{U} \ket{\Psi_{in}}$, if $j \in \Theta$, and $p_j = 0$ otherwise.
It is important to note that, since $\hnl$ is diagonal in the Fock basis, the probabilities $p_j$ are either $0$ or they are given by sums of outcome probabilities of a Gaussian boson sampler with input state $\ket{\Psi_{in}}$ and interferometer $\hat{U}$. Precisely, if we define the energy of a product of Fock states $\ket{S}=\ket{s_1...s_M}$ as
\begin{equation}
    E_{NL}(S)=\braket{S|\hnl|S}=\sum_{k=1}^{L} \eta_i s_i^2+ \sum_{k=1}^M \mu_i s_i,
\end{equation}
and if $j\in \Theta$, we can write 
\begin{equation}\label{eq:p_j}
 p_j= \sum_{S:~ E_{NL}(S)=j}|\braket{S|\hat{U}|\Psi_{in}}|^2.
\end{equation}
Eq.~\eqref{eq:Fourier_relation} suggests that we can extract the values of $p_j$ by an (inverse) Fourier transform of $A_t$ evaluated at different times. However, to prove Theorem~\ref{thm:complexity_amplitudes}, we need two important requirements: $(i)$~we need to extract a \emph{single outcome probability} of a Gaussian boson sampler $P_{S^*}$ of some collision free event $S^*$; $(ii)$~we need to recover this probability up to an exponentially small additive error by approximating $A_t$ only a polynomial number of times.

To address the first point, let us introduce the following Lemma (see Appendix~\ref{app:lem1} for the proof).
\begin{lem} \label{lemma:non_degenerate}
Consider a \emph{collision free} outcome $\ket{S^*}$ containing $N$ photons, where each mode contains at most one photon. Without loss of generality, we assume the photons are contained in the first $N$ modes, i.e., $\ket{S^*}= \ket{1}^{\otimes N}\ket{0}^{\otimes (M-N)}$.
The Hamiltonian 
    \begin{equation}\label{eq:H_NL_nondeg}
    \hat{H}_{NL} = N^2 \sum_{j=1}^N \hat{n}_j + \sum_{j=1}^N \hat{n}_j^2 + N^2(N+2) \sum_{j=N+1}^M  \hat{n}_j
\end{equation}
has $\ket{S^*}$ as an eigenstate with a non-degenerate eigenvalue. 
\end{lem}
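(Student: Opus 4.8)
The plan is to use the fact that $\hat{H}_{NL}$ is diagonal in the Fock basis, so that $\ket{S^*}$ is automatically an eigenstate and the only real content is that its eigenvalue is shared by no other Fock state. A one-line computation gives that eigenvalue, $E^* := \braket{S^*|\hat{H}_{NL}|S^*} = N^2\cdot N + N\cdot 1 = N^3 + N$. Everything then reduces to showing: for any Fock state $\ket{S}=\ket{s_1,\dots,s_M}$, the equality $\braket{S|\hat{H}_{NL}|S} = N^3 + N$ forces $S = S^*$. (Throughout we take $N\ge 1$, i.e.\ the outcome carries a positive photon number; for $N=0$ the statement is not meant to apply.)

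To organize the argument I would group the modes as in the Hamiltonian: set $a=\sum_{j=1}^N s_j$, $q=\sum_{j=1}^N s_j^2$ and $b=\sum_{j=N+1}^M s_j$, all non-negative integers, so that $\braket{S|\hat{H}_{NL}|S}=N^2 a + q + N^2(N+2)\,b$. The two elementary facts I would lean on are the sandwich $a\le q\le a^2$ valid for non-negative integers (left inequality because $s_j^2\ge s_j$; right inequality because the cross terms in $(\sum_j s_j)^2$ are non-negative), and the observation that the coefficient $N^2(N+2)$ on the outer modes already exceeds the entire energy budget $N^3+N$ for $N\ge 1$.

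The first step is to eliminate the outer modes: if $b\ge 1$ then the energy is at least $N^2(N+2)=N^3+2N^2>N^3+N$, a contradiction, so $b=0$ and all photons live in the first $N$ modes. The second step is a case split on $a$ using the constraint $N^2 a + q = N^3+N$. If $a\le N-1$, then $q=N^3+N-N^2 a\ge N^2+N$, but $q\le a^2\le(N-1)^2<N^2+N$ for $N\ge1$, impossible; if $a\ge N+1$, then $q=N^3+N-N^2 a\le N-N^2\le 0$, but $q\ge a>0$, impossible. Hence $a=N$, whence $q=N$; combined with $q\ge a$ and the fact that equality in $s_j^2\ge s_j$ holds only at $s_j\in\{0,1\}$, this forces every $s_j\in\{0,1\}$, and since there are $N$ of them summing to $N$ we get $s_j=1$ for $j\le N$, i.e.\ $S=S^*$. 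This exhibits $E^*$ as a non-degenerate eigenvalue.

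I do not expect a genuine obstacle here; the real content of the lemma is the \emph{choice} of the three coefficients, which is what makes the above go through. The weight $N^2$ on the linear number operators in the first $N$ modes must be large enough that the energy window of the sector with fixed $a$, namely $[\,N^2 a + a,\ N^2 a + a^2\,]$, contains $N^3+N$ only for $a=N$ — in particular the quadratic coefficient $1$ is small enough that it can only resolve configurations \emph{within} that sector; and the weight $N^2(N+2)$ on the remaining modes must dominate the whole window of the first block so that no photon can escape without overshooting $E^*$. This is also precisely the reason the authors keep explicit linear terms in $\hnl$: they furnish the tunable parameters needed to engineer non-degeneracy of a chosen collision-free outcome.
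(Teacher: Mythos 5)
Your proof is correct and follows essentially the same route as the paper's: both first use the dominant coefficient $N^2(N+2)$ to exclude any photon in the last $M-N$ modes, then isolate the sector with exactly $N$ photons in the first block via the bound $a \le \sum_{j\le N} s_j^2 \le a^2$, and finally use the fact that the lower bound is attained only by $0/1$ occupations to single out $\ket{S^*}$. The only differences are presentational: the paper constructs the Hamiltonian with free parameters, obtains the key bound from strict Schur-convexity/majorization, and fixes $\gamma=1/N^2$, $\Gamma=N+2$ before rescaling by $N^2$, whereas you verify the rescaled Hamiltonian directly using the elementary inequalities $s_j^2\ge s_j$ and $\sum_j s_j^2\le\left(\sum_j s_j\right)^2$.
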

\noindent Considering $\hat{H}_{NL}$ from Eq.~\eqref{eq:H_NL_nondeg}, the energy associated to $\ket{S^*}$ is $j^*= N(N^2+1)$. Since the eigenvalue is non-degenerate, Eq.~\eqref{eq:p_j} implies $p_{j^*}= P_{S^*}= |\braket{S^*|\hat{U}|\Psi_{in}}|^2$,
which is a single GBS outcome probability. The next step is to show how to recover this probability up to an exponentially small additive error. To do so, we use the fact that for a large enough value $j > J_{max}$, the probabilities $p_{j}$ decrease exponentially. Namely, we consider the following approximation of $P_{S^*}$:
\begin{equation} \label{eq:approx_PS}
    \begin{aligned}
        Q_{S^*}& = \frac{1}{J_{\max}}\sum_{k=0}^{J_{\max}-1} A_{\frac{2 \pi k}{J_{\max}}}e^{-i 2 \pi k j^*/J_{\max}} \\
        & = \sum_{j=0}^\infty p_j\sum_{k=0}^{J_{\max}-1}\dfrac{e^{i \frac{2 \pi k (j-j^*)}{J_{\max}}}}{J_{\max}} \\
        & = p_{j^*}+ \sum_{j=1}^{\infty}p_{j^*+j J_{\max}}.
    \end{aligned}
\end{equation}
A simple bound for the approximation error is then
\begin{align}\label{eq:approx_error}
    |P_{S^*}-Q_{S^*}| \leq \sum_{j\geq J_{\max}}p_j\equiv \eps.
\end{align}
The following Lemma provides a bound on $\eps$ (see Appendix~\ref{app:lem2} for the proof).
\begin{lem}\label{lem:truncation}
Assume that the number of photons $N$ is large enough such that $N^2(N+2) > N^*$,
with
\begin{equation}
    N^* = (4 \sinh^2(r)+2)\left(\frac{\log(2)}{2}K + c N \log(N)\right).
\end{equation}
Then, if we choose $J_{\max}= N^4(N+2)^2=O(N^6)$, we have
\begin{equation}
    \sum_{j\geq J_{\max}}p_j\leq \exp(- c N\log(N)).
\end{equation}
\end{lem}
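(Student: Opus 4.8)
The plan is to read the tail $\eps=\sum_{j\ge J_{\max}}p_j$ in Eq.~\eqref{eq:approx_error} as a probability: by Eq.~\eqref{eq:p_j}, $\eps$ equals the probability, under the output distribution of the Gaussian boson sampler with input $\ket{\Psi_{in}}$ and interferometer $\hat U$, of obtaining a Fock outcome $S$ with energy $E_{NL}(S)\ge J_{\max}$, and the key is to convert this into a tail bound on the \emph{total photon number} $n(S)=\sum_i s_i$. First I would show that $E_{NL}(S)\ge J_{\max}$ forces $n(S)$ to be large. Splitting $S$ into $a$ photons in the first $N$ modes (those carrying a quadratic term in \eqref{eq:H_NL_nondeg}) and $b=n(S)-a$ photons in the remaining modes, we have $E_{NL}(S)=N^2a+\sum_{j\le N}s_j^2+N^2(N+2)b\le N^2a+a^2+N^2(N+2)b$. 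Hence if $n(S)\le N^2(N+1)$ then $a\le N^2(N+1)$, so $a^2\le N^2(N+1)a$ and every photon contributes at most $N^2(N+2)$ to the energy, giving $E_{NL}(S)\le N^2(N+2)\,n(S)\le N^2(N+2)\cdot N^2(N+1)<N^4(N+2)^2=J_{\max}$. Taking the contrapositive, $E_{NL}(S)\ge J_{\max}$ implies $n(S)>N^2(N+1)$, and since the hypothesis $N^2(N+2)>N^*$ puts us in the regime where $N^2(N+1)\ge N^*$ as well (recall $N^*=O(N\log N)$ while $N^2(N+1)=\Theta(N^3)$), we obtain $\eps\le \Pr[\,n(S)\ge N^*\,]$.

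The second step is a Chernoff bound on $n(S)$. Because the interferometer $\hat U$ conserves total photon number, $n(S)$ has the same distribution as the total photon number of $K$ independent copies of $\ket{\psi_{\mathrm{SMS}}(r)}$; a single copy carries $2m$ photons with probability $\binom{2m}{m}\tanh^{2m}(r)/(2^{2m}\cosh r)$, so using $\sum_{m\ge 0}\binom{2m}{m}x^m=(1-4x)^{-1/2}$ its moment generating function is $\mathbb{E}[e^{\theta n_i}]=(\cosh r)^{-1}\,(1-e^{2\theta}\tanh^2 r)^{-1/2}$, valid whenever $e^{2\theta}\tanh^2 r<1$. The trick is to choose the tilt so that this factor collapses to a constant: taking $e^{2\theta}=1+\tfrac{1}{2\sinh^2 r}$ gives $1-e^{2\theta}\tanh^2 r=\tfrac{1}{2\cosh^2 r}$, hence $\mathbb{E}[e^{\theta n_i}]=\sqrt{2}$ and $\mathbb{E}[e^{\theta\,n(S)}]=2^{K/2}$. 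Markov's inequality then yields $\Pr[\,n(S)\ge N^*\,]\le 2^{K/2}e^{-\theta N^*}$, which is $\le e^{-cN\log N}$ precisely when $\theta N^*\ge \tfrac{\log 2}{2}K+cN\log N$.

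It remains to verify this inequality for the stated $N^*$. With $\theta=\tfrac12\log\!\big(1+\tfrac{1}{2\sinh^2 r}\big)$ and the elementary bound $\log(1+x)\ge x/(1+x)$, one gets $\theta\ge\tfrac12\cdot\tfrac{1/(2\sinh^2 r)}{1+1/(2\sinh^2 r)}=\tfrac{1}{4\sinh^2 r+2}$, i.e.\ $1/\theta\le 4\sinh^2 r+2$. Therefore $\tfrac1\theta\big(\tfrac{\log 2}{2}K+cN\log N\big)\le(4\sinh^2 r+2)\big(\tfrac{\log 2}{2}K+cN\log N\big)=N^*$, which is exactly $\theta N^*\ge\tfrac{\log 2}{2}K+cN\log N$. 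Chaining the two steps gives $\eps\le e^{-cN\log N}$, as claimed, and also explains the particular form of $N^*$: it is dictated by the tilt that trivializes the squeezed-state generating function.

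The step I expect to be the main obstacle is the first one: identifying the worst-case Fock configuration for a fixed photon number --- deciding between concentrating all photons in one quadratic mode (energy $\sim n^2$) versus spreading them in the linear modes (energy $\sim N^2(N+2)n$) --- so that the $J_{\max}$-tail is cleanly dominated by the total-photon-number tail, together with engineering the Chernoff tilt $e^{2\theta}=1+1/(2\sinh^2 r)$ so that the per-mode generating function becomes the constant $\sqrt 2$; the rest is then a short computation using $\log(1+x)\ge x/(1+x)$.
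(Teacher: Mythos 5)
Your proposal is correct and follows essentially the same route as the paper: bound the energy tail by the total-photon-number tail of the $K$-squeezer distribution (using that $\hat{U}$ conserves photon number), then apply a Chernoff/Markov bound with the squeezed-state moment generating function, where your tilt $e^{2\theta}=1+\tfrac{1}{2\sinh^2 r}$ is exactly the paper's reparametrization $e^{2\tau}=1+\beta/\sinh^2 r$ with $\beta=1/2$, and the same inequality $\log(1+x)\ge x/(1+x)$ yields the stated $N^*$. Your energy-to-photon-number step is argued a bit more explicitly than the paper's (giving the threshold $N^2(N+1)$ rather than $\sqrt{J_{\max}}=N^2(N+2)$, hence formally needing $N^2(N+1)\ge N^*$), but this boundary slack is asymptotically immaterial and comparable to the looseness in the paper's own treatment of that step.
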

\noindent Note that $N^2(N+2) > N^*$ is asymptotically fulfilled under the assumption on $N$ in Theorem~\ref{prob:GBS}.
Since $J_{\max}= \text{poly}(N)$, it can be seen from the first identity in Eq.~\eqref{eq:approx_PS} that $Q_{S^*}$ can be computed using only a polynomial number of \emph{exact} computations of the amplitudes $A_{\frac{2 \pi k}{J_{\max}}}$.

Finally, to prove Theorem~\ref{thm:complexity_amplitudes}, we need to consider how an exponentially small error in the estimation of these amplitudes affects the total error. It is possible to see that if we compute 
\begin{align}
    \tilde{Q}_{S^*}&= \frac{1}{J_{\max}}\sum_{k=0}^{J_{\max}-1} \tilde{A}_{\frac{2 \pi k}{J_{\max}}}e^{-i 2 \pi k j^*/J_{\max}},  \label{eq:approx2_PS}
\end{align}
using approximations of the amplitudes such that 
\begin{equation}
    |\tilde{A}_{\frac{2 \pi k}{J_{\max}}}-A_{\frac{2 \pi k}{J_{\max}}}|\leq \exp{(- c N\log(N) - \Omega(N))}, ~\forall~k,
\end{equation}
we still obtain a sufficiently small error
\begin{equation}
    |P_{S^*}-\tilde{Q}_{S^*}| \leq \exp{(- c N\log(N) - \Omega(N))},
\end{equation}
if we choose $J_{\max}$ according to Lemma~\ref{lem:truncation}. Thus, if the evaluation of $\tilde{A}_{\frac{2 \pi k}{J_{\max}}}$ up to exponentially small error could be done in polynomial time, the probabilities $\tilde{Q}_{S^*}$ could be obtained also in polynomial time and it would be possible to solve Problem~\ref{prob:GBS} efficiently. \end{proof}

\section{Continuous-variable Hadamard test}

Given the result of Theorem~\ref{thm:complexity_amplitudes}, it is natural to ask whether there is evidence that not only  amplitudes related to GPNL(1), but also probabilities are exponentially hard to compute. While we leave this as an open question, we show evidence of classical hardness of outcome probabilities of Gaussian processes with two layers of non-linearities (GPNL(2)). In this section, we show how such processes can be used to estimate the amplitudes $A_t$ and thus should be hard to simulate classically given the result from Theorem~\ref{thm:complexity_amplitudes}.

\begin{figure*}[t]
\centering
	\includegraphics[width=1.7\columnwidth]{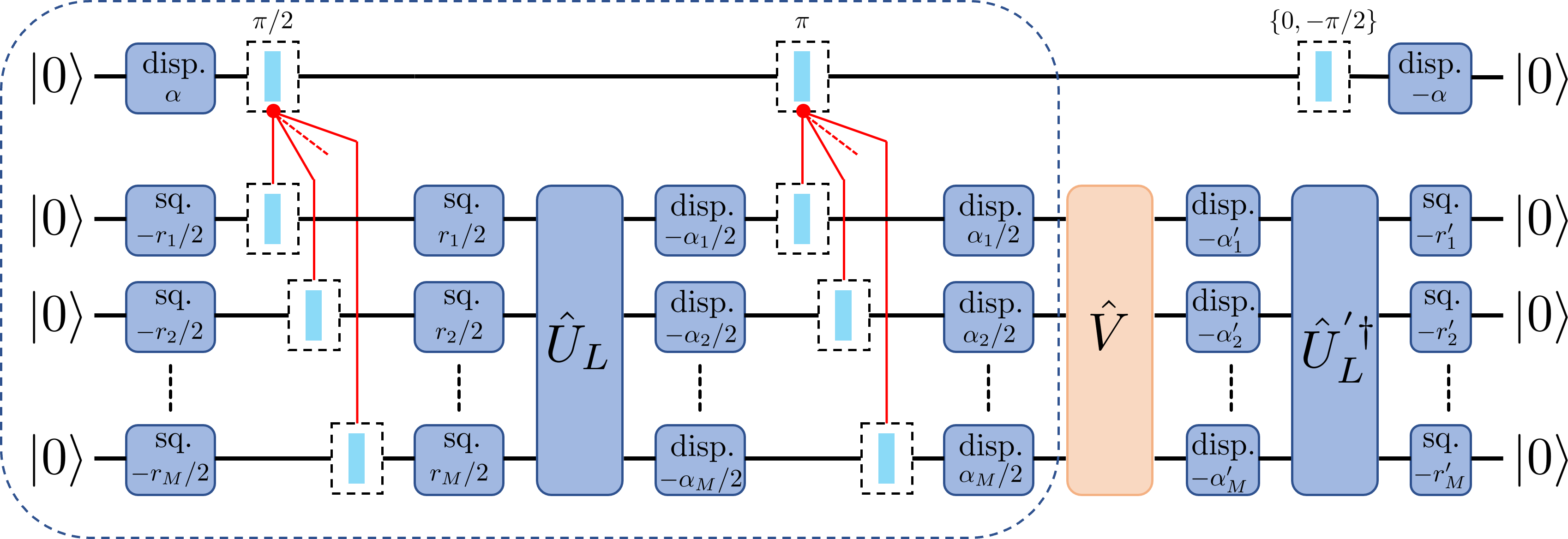}
	\caption{\label{fig:CVHadamard} Continuous-variable Hadamard test  : protocol to estimate amplitudes of the type $A=\bra{\Psi_G'} \hat{V} \ket{\Psi_G}$, where $\ket{\Psi_G}$ and $\ket{\Psi_G'}$ are $M$-mode Gaussian states and $\hat{V}$ can be any unitary that conserves the total number of photons. In the Figure, we choose the general decompositions $\ket{\Psi_G} = \left( \bigotimes_{m=1}^M \hat{D}_{\alpha_m} \right) \hat{U}_L \left( \bigotimes_{l=1}^M \hat{S}_{r_l} \right) \ket{\boldsymbol{0}}$ and $\ket{\Psi'_G} = \left( \bigotimes_{m=1}^M \hat{D}_{\alpha'_m} \right) \hat{U}'_L \left( \bigotimes_{l=1}^M \hat{S}_{r'_l} \right) \ket{\boldsymbol{0}}$. The dashed envelope corresponds to the part of the circuit that prepares the non-Gaussian superposition $\ket{\Lambda} = \ket{\phi_+} \ket{\boldsymbol{0}} + \ket{\phi_-} \ket{\Psi_G}$ of Eq.~\eqref{eq:LambdaApp}, where $\ket{\phi_+}$ and $\ket{\phi_-}$ are the subnormalized even and odd cat states defined in Eq.~\eqref{eq:phi0phi1App}.}
\end{figure*}

To do so, we first develop a general procedure to estimate amplitudes of the type $A=\bra{\Psi_G'} \hat{V}\ket{\Psi_G}$, where $\ket{\Psi_G}$ and $\ket{\Psi_G'}$ are $M$-mode Gaussian states and $\hat{V}$ is any unitary that conserves the total photon number, such as the time-evolution under the Hamiltonians from Eqs.~\eqref{eq:selfK}  and~\eqref{eq:crossK}. The procedure is inspired by the Hadamard test for qubits~\cite{Cleve1998}. It requires an ancillary input mode in a coherent state $\ket{\alpha}$, as well as the ability to perform cross-Kerr interactions between the ancillary mode and each of the $M$-modes via the unitary $\hat{U}_{CP}(\varphi) = e^{- i \varphi \hat{n}_0 \sum_{k=1}^{M} \hat{n}_k}$.
Here, the number operator $\hat{n}_0$ acts on the ancilla mode while the number operators $\hat{n}_k$ act on the remaining $M$ system modes. The operator $\hat{U}_{CP}(\varphi)$ can be seen as a controlled-phase gate, as it applies single-mode phase shifts controlled on the number of photons in the ancillary mode.
The protocol is represented on Fig.~\ref{fig:CVHadamard} and it involves three main steps which are outlined in what follows (see Appendix~\ref{app:CVHadamard} for details):

$(i)$~Starting from the $(M+1)$-mode Gaussian state $\ket{\alpha}\ket{\boldsymbol{0}}$, where  $\ket{\boldsymbol{0}}$ is the $M$-mode vaccum state, prepare the superposition 
\begin{equation}
\ket{\Lambda} = \ket{\phi_+} \ket{\boldsymbol{0}} + \ket{\phi_-} \ket{\Psi_G},
\end{equation}
where $\ket{\phi_{\pm}} = (\ket{\alpha} \pm \ket{-\alpha})/2$ are subnormalized even and odd cat states~\cite{dodonov1974even};

$(ii)$~Apply the unitary $\hat{V}$, acting non-trivially on the $M$ system modes, to obtain the state:
\begin{equation}
 (\mathbb{I}\otimes\hat{V})\ket{\Lambda}= \ket{\phi_+}  \ket{\boldsymbol{0}} + \ket{\phi_-} \hat{V} \ket{\Psi_G},   
\end{equation} 
where we assumed $\hat{V}\ket{\boldsymbol{0}}=\ket{\boldsymbol{0}}$;

$(iii)$~Perform a Gaussian measurement to estimate the probability
\begin{equation}\label{eq:final_probability}
    \begin{aligned}
        |\bra{\alpha} \bra{\Psi_G'} \mathbb{I}\otimes\hat{V}\ket{\Lambda}|^2 & = c_1 | \bk{\Psi_G'}{\boldsymbol{0}} |^2 + c_2 | \bra{\Psi_G'} \hat{V} \ket{\Psi_G} |^2 \\
        & \quad + c_3 \, \mathrm{Re}[ \bk{\Psi_G'}{\boldsymbol{0}} \bra{\Psi_G} \hat{V} \ket{\Psi_G'}],
    \end{aligned}
\end{equation}
where $c_1, c_2$ and $c_3$ are functions of $\alpha$ defined above Eq.~\eqref{eq:c1c2c3} of Appendix~\ref{app:CVHadamard}.

To prepare state $\ket{\Lambda} $ in step $(i)$, we assume we can prepare $\ket{\Psi_G}= \hat{U}_G \ket{\boldsymbol{0}}$, where $\hat{U}_G$ is a Gaussian unitary.
We show in Appendix~\ref{app:CVHadamard} that, apart from Gaussian operations, step $(i)$ requires a single call to the controlled-phase operator $\hat{U}_{CP}(\varphi)$ if $\hat{U}_G$ contains no displacement, and two calls to $\hat{U}_{CP}(\varphi)$ otherwise.

Overall, this leads to a conceptually simple protocol to estimate the complex-valued  amplitude $\bra{\Psi_G'} \hat{V} \ket{\Psi_G}$. By inspecting Eq.~\eqref{eq:final_probability} we notice that the overlap $\bk{\Psi_G'}{\boldsymbol{0}}$ can be computed efficiently analytically. Moreover, the probability $| \bra{\Psi_G'} \hat{V} \ket{\Psi_G} |^2$ can be estimated independently. By plugging in the values of these quantities, we obtain the value of $\mathrm{Re}[ \bra{\Psi_G'} \hat{V} \ket{\Psi_G}]$ from the measured probability in step $(iii)$.
Note that $| \bra{\Psi_G'} \hat{V} \ket{\Psi_G} |^2$ can be seen as the overlap between two states $\hat{V} \ket{\Psi_G}$ and $\ket{\Psi'_{G}}$, so it can be estimated by preparing $\hat{V}\ket{\Psi_G}$ and estimating the probability of observing the state $\ket{\Psi_{G}}$ via a Gaussian measurement. Other protocols to estimate state overlaps involve the preparing the two states,  applying a beam-splitter unitary and performing Fock state measurements \cite{daley2012measuring}, or coupling the two systems to an ancillary two-level system in a SWAP test \cite{Filip2002}, but these go beyond our framework which only assumes Gaussian measurements.

\section{Complexity of GPNL(2) probabilities}

The precise way we can use GPNL(2) to estimate amplitudes $A_t$ from Eq.~\eqref{eq:At} is represented in Fig.~\ref{fig:CVHadamardMain}. This circuit uses an ancillary mode, in order to estimate the amplitudes $A_t$ defined in Eq.~\eqref{eq:At}.  Moreover, from Theorem~\ref{thm:complexity_amplitudes} we know how to estimate GBS outcome probabilities from a polynomial number of amplitudes $A_t$. Combining the two arguments, we obtain the following theorem which shows hardness of approximating GPNL(2) probabilities up to an exponentially small error. 
\begin{theo}[Complexity of GPNL(2) probabilities] \label{thm:complexity_GPNL2}
Assume the existence of an efficient ($\text{poly}(M)$-time) classical algorithm to approximate outcome probabilities of GPNL(2) of the form
   \begin{equation} \label{eq:thm:complexity_GPNL2}
    |\bra{\Psi_G}\prod_{j=1}^2 \exp(i H_{NL}^{(j)}\Delta t) U_G^{(j)}
    \ket{0}^{\otimes M}|^2
\end{equation} 
up to an additive error $O(\exp(- c M \log(M)))$. This would imply the existence of  an efficient ($\text{poly}(N)$-time) classical algorithm to approximate the GBS outcome probabilities (Problem~\ref{prob:GBS}). 
\end{theo}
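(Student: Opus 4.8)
The plan is to reduce to Theorem~\ref{thm:complexity_amplitudes}. Since the proof of that theorem already shows how $\text{poly}(N)$ exponentially accurate evaluations of the amplitudes $A_t$ of Eq.~\eqref{eq:At} --- for the non-linear Hamiltonian $\hnl$ of Eq.~\eqref{eq:H_NL_nondeg} and the times $t=2\pi k/J_{\max}$, $k=0,\dots,J_{\max}-1$ with $J_{\max}=\text{poly}(N)$ --- can be post-processed (via the inverse Fourier transform of Eq.~\eqref{eq:approx2_PS}, using Lemmas~\ref{lemma:non_degenerate} and~\ref{lem:truncation}) into a solution of Problem~\ref{prob:GBS}, it suffices to exhibit, from the assumed GPNL(2) oracle, an efficient classical procedure computing each such $A_t$ up to additive error $\exp(-cN\log N-\Omega(N))$. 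First I would observe that $A_t=\bra{\Psi_G}\hat V\ket{\Psi_G}$ with $\ket{\Psi_G}=\hat U\ket{\Psi_{in}}$ and $\hat V=e^{i\hnl t}$: here $\ket{\Psi_G}$ is a Gaussian state with \emph{no displacement} (a passive interferometer acting on squeezed and vacuum modes), while $\hat V$ is diagonal in the Fock basis, hence conserves total photon number and obeys $\hat V\ket{\boldsymbol 0}=\ket{\boldsymbol 0}$ because $\hnl\ket{\boldsymbol 0}=0$. So $A_t$ is exactly an amplitude of the kind handled by the continuous-variable Hadamard test of the previous section, with $\ket{\Psi_G'}=\ket{\Psi_G}$.

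Next I would instantiate that protocol and argue that the resulting circuit is a GPNL(2) process on $M+1$ modes. Because $\ket{\Psi_G}$ carries no displacement, step~$(i)$ needs a \emph{single} call to $\hat U_{CP}(\varphi)=e^{-i\varphi\hat n_0\sum_k\hat n_k}$, which is precisely $e^{iH_{\mathrm{CK}}\Delta t}$ for the cross-Kerr Hamiltonian of Eq.~\eqref{eq:crossK} with $J_{0k}=J_{k0}$ chosen (all other $J_{ij}=0$) and poly-bounded; this is the first non-linear layer. Step~$(ii)$ applies $\hat V=e^{i\hnl t}$, and rewriting $\eta_i\hat n_i^2=\eta_i\,\ad_i\ad_i\ha_i\ha_i+\eta_i\hat n_i$ exhibits $\hat V$ as $e^{iH_{\mathrm{SK}}t}$ dressed by Gaussian phase shifts; this is the second non-linear layer. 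Everything else --- preparing $\ket\alpha\ket{\boldsymbol 0}$, the Gaussian dressings internal to the cat-state construction, and the squeezers and interferometer that build $\ket{\Psi_G}$ --- is Gaussian and is absorbed into the $U_G^{(j)}$'s, while a leftover Gaussian unitary at the output is absorbed into the final Gaussian measurement. Hence the measured probability of Eq.~\eqref{eq:final_probability} is a GPNL(2) outcome probability of the form in Eq.~\eqref{eq:thm:complexity_GPNL2}.

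It then remains to read $A_t$ off such probabilities. In Eq.~\eqref{eq:final_probability} with $\ket{\Psi_G'}=\ket{\Psi_G}$ the overlap $\bk{\Psi_G'}{\boldsymbol 0}=\bra{\Psi_{in}}\ket{\boldsymbol 0}=(\cosh r)^{-K/2}$ is known analytically, the $c_i$ are known functions of $\alpha$, and $|\bra{\Psi_G'}\hat V\ket{\Psi_G}|^2=|A_t|^2$ is itself a GPNL(1) --- hence GPNL(2), with a trivial first layer --- outcome probability (prepare $e^{i\hnl t}\hat U\ket{\Psi_{in}}$ and project onto $\hat U\ket{\Psi_{in}}$ via a Gaussian measurement), so the oracle returns it to the same accuracy. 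One call of the oracle thus yields $\mathrm{Re}[A_t]$, and a second call --- the protocol run with the ancilla projected instead onto $\ket{i\alpha}$, still a Gaussian measurement --- yields $\mathrm{Im}[A_t]$. Solving for $A_t$ entails dividing by $c_3\,\bk{\Psi_G'}{\boldsymbol 0}$; since $\bk{\Psi_G'}{\boldsymbol 0}=(\cosh r)^{-K/2}=e^{-\Theta(K)}$ and $K=O(\sqrt M)$ in the regime $\bar N=o(\sqrt M)$, $\bar N=K\sinh^2 r$, the additive error grows only by a factor $e^{O(\sqrt M)}$, leaving $\exp(-cM\log M+O(\sqrt M))\le\exp(-cN\log N-\Omega(N))$ because $N=\Theta(\bar N)=O(\sqrt M)$. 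Running this for all $\text{poly}(N)$ times $t$ and invoking Theorem~\ref{thm:complexity_amplitudes} gives a $\text{poly}(M)=\text{poly}(N)$ algorithm for Problem~\ref{prob:GBS}.

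The hard part is the layer count of the second paragraph: one must be certain the Hadamard-test circuit for $A_t$ contains \emph{exactly two} non-Gaussian layers (one cross-Kerr, one self-Kerr), and this hinges on $\ket{\Psi_G}$ having no displacement --- otherwise step~$(i)$ costs a second $\hat U_{CP}$ call and one drops into GPNL(3) --- and on $\hat V$ fixing the vacuum. The remaining delicate point is the error budget: the postulated GPNL(2) accuracy $\exp(-cM\log M)$ must survive division by the exponentially small overlap $(\cosh r)^{-K/2}$, which it does only because that accuracy is super-exponential in $M$ while the amplification is a mere $e^{O(\sqrt M)}$.
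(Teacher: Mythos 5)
Your proposal is correct and follows essentially the same route as the paper: instantiate the CV Hadamard test with $\ket{\Psi_G}=\ket{\Psi'_G}=\hat{U}\ket{\Psi_{in}}$ and $\hat{V}=e^{i\hnl t}$, observe that the resulting circuit is a GPNL(2) process (one cross-Kerr layer from $\hat{U}_{CP}$, one self-Kerr layer from $\hnl$), use the assumed oracle for both the Hadamard-test probability and $|A_t|^2$, solve for $A_t$ while absorbing the $(\cosh r)^{K/2}/|c_3|$ amplification into the super-exponential accuracy $\exp(-cM\log M)$, and then invoke Theorem~\ref{thm:complexity_amplitudes}. Your explicit treatment of $\mathrm{Im}[A_t]$ via the rotated ancilla measurement matches the paper's appendix (the main-text proof only writes out $\mathrm{Re}[A_t]$), and your layer-count and error-budget remarks coincide with the paper's reasoning.
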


\begin{proof}[Proof of Theorem~2]
Suppose there is an efficient ($\text{poly}(M)$-time) classical algorithm to approximate probabilities of the form
\begin{equation}
    |\bra{\Psi_G}\prod_{j=1}^2 \exp(i H_{NL}^{(j)}\Delta t) U_G^{(j)}
    \ket{0}^{\otimes M}|^2,
\end{equation} 
up to an additive error $\exp(- c M \log(M))$. This would imply that the outcome probabilities of the CV Hadamard test presented in the main text could be efficiently  approximated up to this error, for $\hat{V}= \exp(- i t \hat{H}_{NL})$, since the overall circuit can be seen as a Gaussian process with two layers of non-linearities -- one from the controlled-phase gate $\hat{U}_{CP}(\pi/2)$ and one from the time-evolution under $\hat{H}_{NL}$. Choosing $\ket{\Psi_G}=\ket{\Psi'_G}= \ket{\Psi_{out}}=\hat{U}\ket{\psi_{\mathrm{SMS}}(r)}^{\otimes K }\ket{0}^{\otimes (M-K)}$ to be the output state of the Gaussian boson sampler, we have from Eq.~\eqref{eq:final_probability_appCVHad} that
\begin{figure}
\centering
\includegraphics[width=0.9\columnwidth]{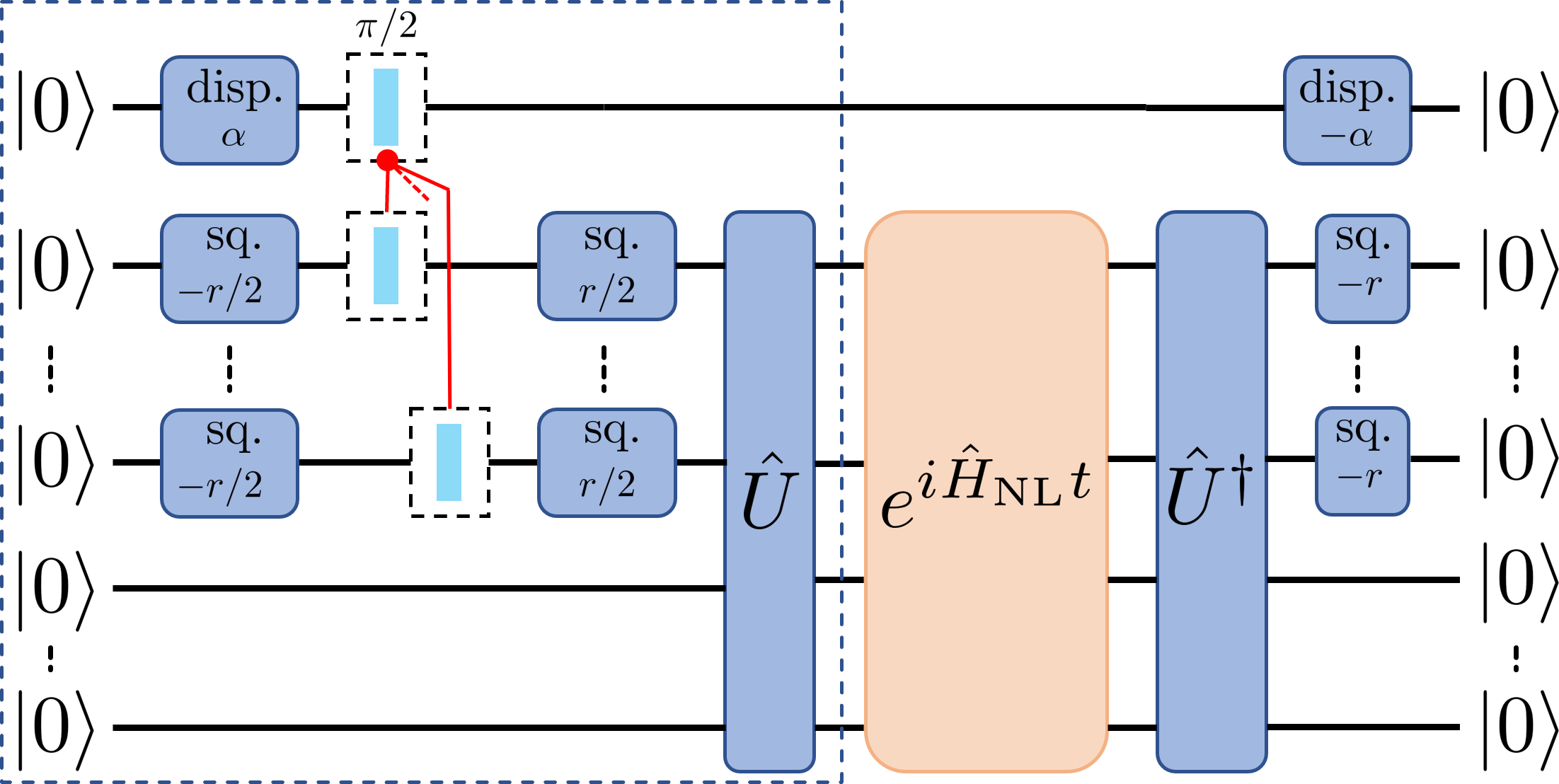}
\caption{\label{fig:CVHadamardMain} An instance of the CV Hadamard test, used to estimate the amplitude $A_t$. The red lines represent the controlled operation $\hat{U}_{CP}(\pi/2)$ and the dashed envelope corresponds to the part of the circuit that prepares the state $\ket{\Lambda}$.}
\end{figure}
\begin{equation}\label{eq:final_probability_app}
    \begin{aligned}
        &|\bra{\alpha} \bra{\Psi_G} \mathbb{I}\otimes\hat{V}\ket{\Lambda}|^2  =\\
        & = c_1 | \bk{\Psi_{out}}{\boldsymbol{0}} |^2 + c_2 | A_t |^2 + c_3 \, \mathrm{Re}[ \bk{\Psi_{out}}{\boldsymbol{0}} A_t] \\
        & = c_1 (\cosh(r))^{-K} + c_2 |A_t|^2 + c_3 (\cosh(r))^{-K/2} \, \mathrm{Re}[A_t],
    \end{aligned}
\end{equation}
where $c_1$, $c_2$ and $c_3$ are functions of $\alpha$ (see Eq.~\eqref{eq:final_probability_appCVHad}). Denoting $P^{(2)}_t = |\bra{\alpha} \bra{\Psi_G} \mathbb{I}\otimes\hat{V}\ket{\Lambda}|^2$ and $P^{(1)}_t = |A_t|^2$, we can write
\begin{align}
    \mathrm{Re}[A_t] =& \frac{(\cosh(r))^{K/2}}{c_3} P^{(2)}_t - \frac{c_2 (\cosh(r))^{K/2}}{c_3}  P^{(1)}_t \nonumber\\ &- \frac{c_1}{c_3} (\cosh(r))^{-K/2}.
\end{align}
From the hypothesis of the Theorem, we can compute $\tilde{P}^{(1)}_t$ such that
\begin{equation}
    |\tilde{P}^{(1)}_t-P^{(1)}_t| \leq \exp{\left(-c M \log M \right)},
\end{equation}
and similarly we can compute $\tilde{P}^{(2)}_t$ such that
\begin{equation}
    |\tilde{P}^{(2)}_t-P^{(2)}_t| \leq \exp{\left(- cM \log M \right)}.
\end{equation}
We can therefore also compute
\begin{align}
    \tilde{R}_t =& \frac{(\cosh(r))^{K/2}}{c_3} \tilde{P}^{(2)}_t - \frac{c_2 (\cosh(r))^{K/2}}{c_3} \tilde{P}^{(1)}_t \\& - \frac{c_1}{c_3} (\cosh(r))^{-K/2},
\end{align}
and we have
\begin{equation}
    \begin{aligned}
        &|\tilde{R}_t - \mathrm{Re}[A_t]|
         \leq  \\ &\leq \frac{(\cosh(r))^{K/2}}{|c_3|} \left( |\tilde{P}^{(2)}_t - P^{(2)}_t| + |c_2| |\tilde{P}^{(1)}_t - P^{(1)}_t| \right) \\
        &\leq (1+|c_2|) \frac{(\cosh(r))^{K/2}}{|c_3|} \exp{\left(- cM \log M \right)} \\
         &\leq \frac{1+|c_2|}{|c_3|} e^{K r} \exp{\left(- cM \log M \right)} \\
         &\leq \exp(- N\log(N)) ,
    \end{aligned}
\end{equation}
where in the last step we use the fact that $K\leq M $ and $N<3/2 \bar{N}= o(\sqrt{M}) $ from the collision free condition.
\end{proof}

\subsection{Exact sampling hardness}

While our work focuses on understanding the complexity of approximating outcome probabilities, our results can readily be coupled with standard techniques in the literature (see for instance~\cite{aaronson2011computational,levonBS_Gaussian_meas,chabaud2017photonadded}) to show that it should be hard to sample from Gaussian processes with two layers of non-linearities. Here, we outline a sketch of the argument that can be used to convert Theorem~\ref{thm:complexity_GPNL2} into an exact sampling hardness result. We consider the problem of sampling from the quantum circuit realizing the CV Hadamard test shown on Fig.~\ref{fig:CVHadamardMain}, where the measurements are now discretized versions of Gaussian measurements, similarly to what was considered in Refs.~\cite{levonBS_Gaussian_meas, chabaud2017photonadded}. Namely, this is done by discretizing the phase space of every output mode of the circuit into boxes of particular sizes. In this setting, the probability of observing the outcome associated with vaccuum will be approximately given by Eq.~\eqref{eq:final_probability_app}, with an error associated to the discretization of the Gaussian measurement. This error can be made exponentially small using a suitable discretization~\cite{levonBS_Gaussian_meas,chabaud2017photonadded}. Hence, the probability of observing the outcome associated to vaccuum is still \#P-hard to compute according to the result of Theorem~\ref{thm:complexity_GPNL2}. Following standard arguments from~\cite{aaronson2011computational}, which make use of Stockmeyer's algorithm, this implies that no efficient classical algorithm exists to exactly sample from outcomes of the circuit represented on Fig.~\ref{fig:CVHadamardMain} with discretized Gaussian measurements, unless the polynomial hierarchy collapses.

The argument sketched above leads to an \emph{exact sampling} hardness proof. However, since no quantum device is perfect, any quantum computational advantage proof should consider the hardness of \emph{approximately sampling} from the quantum circuit, i.e., up to some fixed total variation distance. Proving approximate sampling hardness is considerably more challenging and usually requires additional mathematical conjectures~\cite{aaronson2011computational,deshpande2022GBS}.

A limitation of the technique we use to show Theorem~\ref{thm:complexity_GPNL2} is that we can only connect a single outcome probability  from the circuit of Fig.~\ref{fig:CVHadamardMain} (the vaccuum probability) to a known hard problem (approximation of GBS probabilities). It is unclear from our techniques whether it is possible to argue that the probabilities of other possible outcomes of Gaussian measurements can also be \#P-hard to compute, which is a necessary step towards showing approximate sampling hardness. Hence, we leave the question of understanding the complexity of approximately sampling from Gaussian processes with a one or two layers of non-linearities as subject of future research.

\section{Discussion}

The demonstration that linear optical systems can be difficult to simulate classically has sparked a huge theoretical and experimental effort in the last decade. Here, we show that the introduction of a limited number of layers of non-linear interactions of the Kerr type can greatly affect the complexity of simulating quantum optical systems. Our main result is that a \emph{single layer} of non-linearities is enough to turn a classically simulable system (a Gaussian quantum process), into one where computing transition amplitudes is at least as hard as computing probabilities of a GBS experiment. Our result, combined with the results from Ref.~\cite{grier2022bipartiteGBS}, implies that a polynomial number of GPNL(1) amplitudes can be used to estimate the absolute value of the permanent of an arbitrary matrix. We leave as an open question whether it is possible to also prove the hardness of computing transition \emph{probabilities} of GPNL(1). This might require ideas going beyond the main ingredient of the proof of Theorem~\ref{thm:complexity_amplitudes}, which uses the time-series approach for quantum phase estimation~\cite{somma2002, somma2019} to establish the connection between amplitudes of GPNL(1) and GBS outcome probabilities.

In order to prove the hardness of computing outcome probabilities, we introduced an extra layer of non-linearities of the cross-Kerr type and showed that outcome probabilities of GPNL(2) can be used to estimate amplitudes of GPNL(1). More generally, we developed a version of the Hadamard test that is adapted for CV systems. Given the importance of the Hadamard test in quantum computing and, in particular, as a possible way to solve the problem of Hamiltonian eigenvalue estimation with less coherence requirements~\cite{somma2002, somma2019}, we believe this result to be of independent interest. Implementations would in principle be possible in CV quantum systems with tunable cross-Kerr interactions, such as cavity QED~\cite{crossKerrcQED2018}.  

While our work focuses on the complexity of \emph{strong simulation} of GPNL($k$), we leave the question of understanding the complexity of \emph{approximately sampling} from random ensembles of these quantum processes open. Moreover, it would be interesting to find potential applications of such quantum circuits. As boson sampling devices have found applications to molecular vibrational spectra problems \cite{huh2015boson}, the addition of a limited number of layers of non-linearities in boson sampling set-ups may potentially be used to probe the spectra of more complex, interacting bosonic Hamiltonians, potentially outperforming competing classical methods \cite{oh2024quantum}. Moreover, the addition of non-linearities may also increase expressivity of bosonic quantum circuits in the context of quantum machine learning, in a similar fashion as the addition of layers of adaptivity \cite{chabaud2021quantum}.
Pursuing such questions should not only lead to a better understanding of classical simulability of bosonic systems, but also it may inspire new approaches for quantum computational advantage and new algorithmic applications of near-term bosonic quantum devices with a limited number of interaction terms.

\begin{acknowledgments}
We thank Daniel Brod for useful comments on the manuscript.
L.N. acknowledges useful discussions with Ulysse Chabaud and Giulia Ferrini. 
M.G.J. acknowledges support by the Belgian Fund for Scientific Research (F.R.S.--FNRS) and by the Carlsberg Foundation under Grant CF19-0313.
L.N. acknowledges support by the Belgian Fund for Scientific Research (F.R.S.--FNRS), FCT-Fundação para a Ciência e a Tecnologia (Portugal) via the Project No. CEECINST/00062/2018 and from the European Union's Horizon 2020 research and innovation program through the FET project PHOQUSING (“PHOtonic QUantum SamplING machine” - Grant Agreement No. 899544). 
\end{acknowledgments}

\bibliography{compNonLinBoson}
\onecolumngrid

\appendix

\section{Construction of the non-linear Hamiltonian\label{app:lem1}}

\begin{proof}[Proof of Lemma~\ref{lemma:non_degenerate}]
Let us consider a \emph{collision free} outcome $\ket{S}$, containing $N$ photons, where each mode contains at most one photon. Without loss of generality, we assume that the photons are contained in the first $N$ modes, i.e., 
\begin{equation}
    \ket{S}= \ket{1}^{\otimes N}\ket{0}^{\otimes (M-N)}. 
\end{equation}
We can construct a Hamiltonian for which this state is an eigenstate with a non-degenerate eigenvalue. To do this, define the Hamiltonian
\begin{equation}
    \hat{H}^{(1)}_{NL} = \sum_{j=1}^N \hat{n}_j + \gamma \sum_{j=1}^N \hat{n}_j^2 + \Gamma \sum_{j=N+1}^M  \hat{n}_j.
\end{equation}
We have that
\begin{equation}
    \hat{H}^{(1)}_{NL} \ket{S} = (1+\gamma) N \ket{S}.
\end{equation}
We are now going to choose the parameters $\gamma$ and $\Gamma$ so that any eigenstate of $\hat{H}^{(1)}_{NL}$ other than $\ket{S}$ is associated with an eigenvalue that is stricly different from $(1+\gamma)N$. 

Since  $\hat{H}^{(1)}_{NL}$ is a linear combination of terms $\hat{n}_j$ and $\hat{n}_j^2$, any product of Fock states is an eigenstate of this Hamiltonian. We first focus on eigenstates with zero photons in the last $M-N$ modes and whose total number of photons is $k$. Denote any such state by
\begin{equation} \label{eq:state2}
    \ket{n_1, n_2, \cdots, n_n} \ket{0}^{\otimes (M-N)}, \quad \mathrm{with}~~ \, \sum_{j=1}^N n_j = k,
\end{equation}
and define the vector $\boldsymbol{n}^{(k)} = (n_1, n_2, \cdots, n_n)$.  Via an appropriate choice of $\gamma$, we can ensure that $\ket{S}$ is the only eigenstate with eigenvalue $(1+\gamma) N$ among this set of states. To see this, note that the eigenvalue associated with such $k-$photon state is given by
\begin{equation}
    \begin{aligned}
        \hat{H}^{(1)}_{NL}\ket{\boldsymbol{n}^{(k)}} \ket{0}^{\otimes (M-N)} & = E_{\boldsymbol{n}^{(k)}} \ket{\boldsymbol{n}^{(k)}} \ket{0}^{\otimes (M-N)} \\
        & = [k+\gamma F(\boldsymbol{n}^{(k)})] \ket{\boldsymbol{n}^{(k)}} \ket{0}^{\otimes (M-N)},
    \end{aligned}
\end{equation}
where we defined the function
\begin{equation}
    F(\boldsymbol{n}^{(k)}) = \sum_{j=1}^N n_j^2.
\end{equation}
In order to bound the range of values that the function $F$ can take, it is useful to note that this function is strictly Schur-convex~\cite{Majorization}. By definition, a strictly Schur-convex function $F$ is such that, if a vector $\boldsymbol{n}$ is majorized by a vector $\boldsymbol{m}$ (written $\boldsymbol{n} \prec \boldsymbol{m}$)~\cite{Majorization}, but $\boldsymbol{n}$ is not a permutation of $\boldsymbol{m}$, then $F(\boldsymbol{n}) < F(\boldsymbol{m})$. Hence, for a given $k$, the minimum of $F$ is obtained for the $N$-dimensional vector $ \boldsymbol{1}^{(k)} = (1,1, \dots,1,0, \dots 0)$, containing $k$ $1$'s and $(N-k)$ $0$'s, since this vector is majorized by any vector $\boldsymbol{n}^{(k)}$ which is not a permutation of $ \boldsymbol{1}^{(k)}$. In turn, any vector $\boldsymbol{n}^{(k)}$ is majorized by the $N$-dimensional vector $\boldsymbol{\delta}^{(k)} = (k,0,0,\cdots,0)$, unless it is obtained from a permutation of the entries of this vector. In summary, this implies that for any vector $\boldsymbol{n}^{(k)}$
\begin{equation}
    k= F(\boldsymbol{1}^{(k)}) \leq F(\boldsymbol{n}^{(k)}) \leq F(\boldsymbol{\delta}^{(k)})= k^2,  
\end{equation}
where the equality on the left (right) is achieved if and only if $\boldsymbol{n}^{(k)}$ is obtained by a permutation of the entries of $\boldsymbol{1}^{(k)}$ ($\boldsymbol{\delta}^{(k)}$). In particular, if $k=N$, the vector $\boldsymbol{1}^{(N)}$ is the only vector that minimizes $F$ and, for any other vector $\boldsymbol{n}^{(N)}$, $F(\boldsymbol{n}^{(N)})\geq  (N+3)$. 

To guarantee that the vector $\boldsymbol{1}^{(N)}$ is, in fact, the only vector for which $E_{\boldsymbol{1}^{(N)}}=(1+\gamma)N$, we impose the following conditions
\begin{align}
   (i)~ E_{\boldsymbol{n}^{(k)}}< (1+\gamma)N, ~~~~\mathrm{for~~}k\leq N-1;\\
    (ii)~E_{\boldsymbol{n}^{(k)}}> (1+\gamma)N,~~~~ \mathrm{for~~} k\geq N+1.
\end{align}
Note that, for $k\leq N-1$, we have  
\begin{equation}
    E_{\boldsymbol{n}^{(k)}}= k+\gamma F(\boldsymbol{n}^{(k)})\leq k+ \gamma k^2\leq (N-1)+\gamma(N-1)^2. 
\end{equation}
Hence, we can satisfy condition~$(i)$ by ensuring that, for any positive integer $N$,
\begin{equation}
(N-1) + \gamma (N-1)^2 < (1+\gamma) N,
\end{equation}
which leads to the condition
\begin{equation}
    \gamma < \frac{1}{(N-1)^2-N}.
\end{equation}
For $k\geq N+1$, we have that 
\begin{equation}
    E_{\boldsymbol{n}^{(k)}}\geq k (\gamma +1) \geq (N+1)(\gamma+1),  
\end{equation}
which means that condition~$(ii)$ is automatically ensured for any positive $\gamma$. For simplicity, we choose $\gamma= 1/N^2$.

Consider now eigenstates with at least one photon in the last $(M-N)$ modes. They will all correspond to eigenvalues strictly greater than $(1+\gamma) N$ if $\Gamma > (1+\gamma) N$, so that we can simply choose $\Gamma = N+2$. We end up with
\begin{equation}
    \hat{H}^{(1)}_{NL} = \sum_{j=1}^N \hat{n}_j + \frac{1}{N^2} \sum_{j=1}^N \hat{n}_j^2 + (N+2) \sum_{j=N+1}^M  \hat{n}_j.
\end{equation}
Finally, we rescale the full Hamiltonian $\hat{H}^{(1)}_{NL}$ so that all eigenvalues are integer-valued and define the Hamiltonian
\begin{equation}
    \hat{H}_{NL} = N^2 \sum_{j=1}^N \hat{n}_j + \sum_{j=1}^N \hat{n}_j^2 + N^2(N+2) \sum_{j=N+1}^M  \hat{n}_j.
\end{equation}
\end{proof}

\section{Bounding the tail of the \texorpdfstring{$K$}{K}-squeezers distribution.\label{app:lem2}}

\begin{proof}[Proof of Lemma~\ref{lem:truncation}]
Consider the product of $K$ single-mode squeezed states $\ket{\Psi^K(r)} = \ket{\psi_{\mathrm{SMS}}(r)}^{\otimes K}$. The probability of having a total of $n$ photon pair events is given by~\cite{Hamilton2017}
\begin{equation}
    \begin{aligned}
        p^{(K)}_n(r) & = \sum_{\substack{(n_1, n_2, \cdots, n_K) \in \mathbb{N}_0^{K} \\ n_1+n_2+\cdots+n_K=n}} \left| \bk{2 n_1, 2 n_2, \cdots, 2 n_K}{\Psi^K(r)} \right|^2 \\
        & = \binom{\frac{K}{2} + n - 1}{n} \sech^K(r) \tanh^{2n}(r).
    \end{aligned}
\end{equation}
Let $\hat{N}$ be the total photon-number operator and consider the moment generating function of the total photon-number probability distribution given by 
\begin{equation}
    \begin{aligned}
        \bra{\Psi^K(r)} \e^{\tau \hat{N}} \ket{\Psi^K(r)} & = \sum_{n=0}^{\infty} \e^{2 \tau n}p^{(K)}_n(r).
            \end{aligned}
\end{equation}
For values of $\tau$ such that $1< e^{2\tau}< 1/\tanh^2(r)$, the infinite series converges to the value
 \begin{equation}
    \begin{aligned}       
        \bra{\Psi^K(r)} \e^{\tau \hat{N}} \ket{\Psi^K(r)}
        & = \left(\frac{\sech(r)}{\sqrt{1-\e^{2\tau} \tanh^2(r)}}\right)^K.
    \end{aligned}
\end{equation}
Using the Chernoff bound, we have that the probability of having a number of photons larger than $N^*$ is bounded as 
\begin{equation}\label{eq:chernoff}
    \begin{aligned}
        \mathbb{P}(N'>N^*) & \leq \left(\frac{\sech(r)}{\sqrt{1-\e^{2\tau} \tanh^2(r)}}\right)^K e^{-\tau N^*}, \quad \forall \, \tau ~\text{s.t.}~1< e^{2\tau}< 1/\tanh^2(r).
    \end{aligned}
\end{equation}
It will be useful to reparametrize $e^{2\tau}$ in the following way
\begin{align}
    e^{2\tau}&= 1+ \beta (\tanh^{-2}(r)-1)\\&= 1+ \frac{\beta}{\sinh^2(r)},~\text{with}~0<\beta<1. 
\end{align}
This allows us to rewrite Eq.~\eqref{eq:chernoff} as 
\begin{align}\label{eq:errorbound1}
    \mathbb{P}(N'>N^*) & \leq (1-\beta)^{-K/2} \left(1+ \frac{\beta}{\sinh^2(r)}\right)^{-N^*/2},~\text{with}~0<\beta<1.
\end{align}
We would like to find a value of $N^*$ such that 
\begin{equation}   \label{eq:errorcondition}
 \mathbb{P}(N'>N^*)\leq \exp(- c N\log{N}). 
\end{equation}
To do so, let us choose $\beta=1/2$ in Eq.~\eqref{eq:errorbound1}, which we can now write as  
\begin{align}\label{eq:errorbound2}
    \mathbb{P}(N'>N^*) & \leq e^{\log(2) K/2}e^{-(\log(1+\frac{1}{2\sinh^2(r)})N^*/2)}\leq e^{\log(2) K/2}e^{- N^*/(4 \sinh^2(r)+2)}, 
\end{align}
where we have used the inequality $\log(1+x)\geq x/(1+x), \forall x > -1$.  From here, it is possible to see that a choice of 
\begin{equation}\label{eq:N_star}
    N^* \geq (4 \sinh^2(r)+2)\left(\frac{\log(2)}{2}K + c N \log(N)\right), 
\end{equation}
ensures that Eq.~\eqref{eq:errorcondition} is fulfilled. This ``cut-off" in the total photon number can be used to obtain an energy ``cut-off" $J_{\max}$ fulfilling Eq.~\eqref{eq:errorbound2}. According to the Hamiltonian
\begin{equation}
    \hat{H}_{NL} = N^2 \sum_{j=1}^N \hat{n}_j + \sum_{j=1}^N \hat{n}_j^2 + N^2(N+2) \sum_{j=N+1}^M  \hat{n}_j,
\end{equation}
a state of $N'$ photons has a maximum energy $E_{max}(N')= \max (N^2(N+2) N' , N'^2)$. For simplicity, we will assume $N'>N^2(N+2)$, such that $E_{max}(N')= N'^2$. This implies that for $N'>N^2(N+2)$, any eigenstate of $H_{NL}$ with energy larger than $N'^2$ has to have at least $N'$ photons. From this reasoning we conclude that if $\sqrt{J_{max}}>N^2(N+2)$, then
\begin{equation}
  \sum_{j=J_{\max}}^{\infty} p_j\leq \mathbb{P}(N'>\sqrt{J_{max}}).
\end{equation}
Finally, we also assume that the number of photons of the event $N$ of the GBS outcome $S^*$ that we are interested in is large enough such that 
\begin{equation}\label{eq:assumption_Napp}
    N^2(N+2) > (4 \sinh^2(r)+2)\left(\frac{\log(2)}{2}K + c N \log(N)\right). 
\end{equation}
This is fulfilled asymptotically with high probability, given the assumption of the theorem that $\bar{N}/2\leq N\leq 3\bar{N}/2$. To conclude, the assumption~\eqref{eq:assumption_Napp}, together with Eqs.~\eqref{eq:N_star} and~\eqref{eq:errorcondition}, allows us to see that the choice of the cut-off energy $J_{\max}= N^4(N+2)^2= O(N^6)$ ensures that 
\begin{equation}
  \sum_{j=J_{\max}}^{\infty} p_j\leq \mathbb{P}(N'>\sqrt{J_{max}})\leq \exp{(-c N \log(N) )}.
\end{equation}
\end{proof}

\section{Continuous-variable Hadamard test\label{app:CVHadamard}}

Let $\hat{V}$ be a unitary (not necessarily Gaussian) transformation acting on $M$ bosonic modes, that commutes with the total photon number operator, i.e. the sum of the number operators on all $M$ modes. Let $\ket{\Psi_G}$ be a pure Gaussian state of $M$ modes. We suppose that we can construct $\ket{\Psi_G}$, which can always be decomposed as~\cite{Weedbrook2012}
\begin{equation} \label{eq:GaussianPureapp}
    \ket{\Psi_G} = \left( \bigotimes_{m=1}^M \hat{D}_{\alpha_m} \right) \hat{U}_L \left( \bigotimes_{l=1}^M \hat{S}_{r_l} \right) \ket{\boldsymbol{0}},
\end{equation}
where the $\hat{S}_r$ are squeezings, $\hat{U}_L$ is an $M$-mode linear interferometer, the $\hat{D}_{\alpha}$ are displacements and we denote a product of $M$ vacua by $\ket{\boldsymbol{0}} = \ket{0}^{\otimes M}$. We are going to design a CV version of the so-called Hadamard test, that is, a quantum circuit acting on CV states that can be used to estimate the value of the amplitude $\bra{\Psi_G} \hat{V} \ket{\Psi_G}$. In fact, without changing the complexity of our setting, we can deal with a more general case of estimating the amplitude $\bra{\Psi_G'} \hat{V} \ket{\Psi_G}$, where $\ket{\Psi_G'}$ and $\ket{\Psi_G}$ are two different Gaussian states. A scheme of the required operations for this task is presented in Fig.~\ref{fig:CVHadamard}. As we will show, we will require non-Gaussian operations to perform this task, namely the implementation of a controlled version of the Gaussian unitary which prepares $\ket{\Psi_G}$. As summarized in the main text, the procedure involves three main steps which we explain here in detail:~\\~\\
$\boldsymbol{(i)}$~\textbf{Preparing a superposition $\ket{\phi_+} \ket{\boldsymbol{0}} + \ket{\phi_-} \ket{\Psi_G}$.} We start with the state $\left( \bigotimes_{l=1}^M \hat{S}_{-r_l/2} \right) \ket{\boldsymbol{0}}$, and perform a controlled phase operation via the non-Gaussian unitary acting on an ancilla  state and the $M$-mode state  defined as
\begin{equation}
    \hat{U}_{CP}(\varphi) = e^{- i \varphi \hat{n}_0 \sum_{k=1}^{M} \hat{n}_k}. 
\end{equation}
Here, the number operator $\hat{n}_0$ acts on the first ancilla mode, while the $\hat{n}_k$ are the number operators acting on the remaining $M$ modes. This operator can be seen as the controlled application of the phase shift operator
\begin{equation}
\hat{U}_P(j \varphi)= \prod_{k=1}^M\exp(-i j \varphi \hat{n}_k),    
\end{equation}
which introduces a phase-shift of  $j \varphi$ in each of the $M$ modes with the value of $j$ being determined by the number of photons in the ancillary mode. If the ancilla is chosen to be in a coherent state $\ket{\alpha}$, this gives
\begin{equation}
    \hat{U}_{CP}(\varphi) \left( \bigotimes_{l=1}^M \hat{S}_{-r_l/2} \right) \ket{\alpha} \ket{\boldsymbol{0}} = e^{- \frac{1}{2} |\alpha|^2} \sum_{j=0}^{\infty} \frac{\alpha^j}{\sqrt{j!}} \ket{j} \hat{U}_P(j \varphi) \left( \bigotimes_{l=1}^M \hat{S}_{-r_l/2} \right) \ket{\boldsymbol{0}},
\end{equation}
where the $\ket{j}$ are Fock states of the ancilla mode. If we choose $\varphi = \pi/2$ and decompose the state in two components, by splitting the coherent state into the two subnormalized even and odd cat states~\cite{dodonov1974even}
\begin{equation}\label{eq:phi0phi1App}
    \begin{aligned}
        \ket{\phi_+} & = \frac{\ket{\alpha} + \ket{-\alpha}}{2} = e^{- \frac{1}{2} |\alpha|^2} \sum_{j=0}^{\infty} \frac{\alpha^{2j}}{\sqrt{(2j)!}} \ket{2j}, \\
        \ket{\phi_-} & = \frac{\ket{\alpha} - \ket{-\alpha}}{2} = e^{- \frac{1}{2} |\alpha|^2} \sum_{j=0}^{\infty} \frac{\alpha^{2j+1}}{\sqrt{(2j+1)!}} \ket{2j+1},
    \end{aligned}
\end{equation}
we obtain 
\begin{equation}
    \begin{aligned}
        \hat{U}_{CP}(\pi/2) \left( \bigotimes_{l=1}^M \hat{S}_{-r_l/2} \right) \ket{\alpha} \ket{\boldsymbol{0}} = &  \ket{\phi_+} \left( \bigotimes_{l=1}^M \hat{S}_{-r_l/2} \right) \ket{\boldsymbol{0}} + \ket{\phi_-} \left( \bigotimes_{l=1}^M \hat{S}_{r_l/2} \right) \ket{\boldsymbol{0}}.
    \end{aligned}
\end{equation}

\noindent Here we used the fact that the local squeezed vacuum states in the above can be written as a superposition of even Fock states and are therefore invariant under the application of a $\pi$ phase-shift $\hat{U}_P(\pi)$. Additionally, we note that a $\pi/2$ rotation preceding a squeezing of the vaccuum is equivalent to a squeezing in the perpendicular direction in phase space, i.e., anti-squeezing the vaccuum. 
We then perform a squeezing $\left( \bigotimes_{l=1}^M \hat{S}_{r_l/2} \right)$ on the $M$ main modes and obtain the state
\begin{equation}
        \left( \bigotimes_{l=1}^M \hat{S}_{r_l/2} \right) \hat{U}_{CP}(\pi/2) \left( \bigotimes_{l=1}^M \hat{S}_{-r_l/2} \right) \ket{\alpha} \ket{\boldsymbol{0}} = \ket{\phi_+} \ket{\boldsymbol{0}} + \ket{\phi_-} \left( \bigotimes_{l=1}^M \hat{S}_{r_l} \right) \ket{\boldsymbol{0}}.
\end{equation}
The above operation can nicely be interpreted as a squeezing controlled by an effective qubit (the two-level system spanned by the two subnormalized states $\ket{\phi_+}$ and $\ket{\phi_+}$). We now apply the linear interferometer $\hat{U}_L$, before applying a similar controlled displacement. As $\hat{U}_L$
leaves the vaccuum invariant, we have that \begin{equation}
        \hat{U}_L \left( \bigotimes_{l=1}^M \hat{S}_{r_l/2} \right) \hat{U}_{CP}(\pi/2) \left( \bigotimes_{l=1}^M \hat{S}_{-r_l/2} \right) \ket{\alpha} \ket{\boldsymbol{0}} = \ket{\phi_+} \ket{\boldsymbol{0}} + \ket{\phi_-} \hat{U}_L \left( \bigotimes_{l=1}^M \hat{S}_{r_l} \right) \ket{\boldsymbol{0}}.
\end{equation}
Subsequently, we apply the controlled displacement operator which, in analogy to what was previously shown for the controlled squeezing, is divided into three different operations. 
We first apply a displacement $\left( \bigotimes_{m=1}^M \hat{D}_{-\alpha_m/2} \right)$,
\begin{equation}
    \begin{aligned}
         & \left( \bigotimes_{m=1}^M \hat{D}_{-\alpha_m/2} \right) \hat{U}_L \left( \bigotimes_{l=1}^M \hat{S}_{r_l/2} \right) \hat{U}_{CP}(\pi/2) \left( \bigotimes_{l=1}^M \hat{S}_{-r_l/2} \right) \ket{\alpha} \ket{\boldsymbol{0}} \\
        = & \ket{\phi_+} \left( \bigotimes_{m=1}^M \hat{D}_{-\alpha_m/2} \right) \ket{\boldsymbol{0}} + \ket{\phi_-} \left( \bigotimes_{m=1}^M \hat{D}_{-\alpha_m/2} \right) \hat{U}_L \left( \bigotimes_{l=1}^M \hat{S}_{r_l} \right) \ket{\boldsymbol{0}}.
    \end{aligned}
\end{equation}
We now apply another controlled phase shift, this time with a $\pi$ angle, $\hat{U}_{CP}(\pi)$, which gives
\begin{equation}
    \begin{aligned}
         & \hat{U}_{CP}(\pi) \left( \bigotimes_{m=1}^M \hat{D}_{-\alpha_m/2} \right) \hat{U}_L \left( \bigotimes_{l=1}^M \hat{S}_{r_l/2} \right) \hat{U}_{CP}(\pi/2) \left( \bigotimes_{l=1}^M \hat{S}_{-r_l/2} \right) \ket{\alpha} \ket{\boldsymbol{0}} \\
        = & \ket{\phi_+} \left( \bigotimes_{m=1}^M \hat{D}_{-\alpha_m/2} \right) \ket{\boldsymbol{0}} + \ket{\phi_-} \hat{U}_R(\pi) \left( \bigotimes_{m=1}^M \hat{D}_{-\alpha_m/2} \right) \hat{U}_L \left( \bigotimes_{l=1}^M \hat{S}_{r_l} \right) \ket{\boldsymbol{0}} \\
        = & \ket{\phi_+} \left( \bigotimes_{m=1}^M \hat{D}_{-\alpha_m/2} \right) \ket{\boldsymbol{0}} + \ket{\phi_-} \left( \bigotimes_{m=1}^M \hat{D}_{\alpha_m/2} \right) \hat{U}_L \left( \bigotimes_{l=1}^M \hat{S}_{r_l} \right) \ket{\boldsymbol{0}}.
    \end{aligned}
\end{equation}
Finally, we apply another displacement $\left( \bigotimes_{m=1}^M \hat{D}_{\alpha_m/2} \right)$, and get the final state $\ket{\Lambda}$
\begin{equation}
    \begin{aligned}
         \ket{\Lambda} = & \left( \bigotimes_{m=1}^M \hat{D}_{\alpha_m/2} \right) \hat{U}_{CP}(\pi) \left( \bigotimes_{m=1}^M \hat{D}_{-\alpha_m/2} \right) \hat{U}_L \left( \bigotimes_{l=1}^M \hat{S}_{r_l/2} \right) \hat{U}_{CP}(\pi/2) \left( \bigotimes_{l=1}^M \hat{S}_{-r_l/2} \right) \ket{\alpha} \ket{\boldsymbol{0}} \\
        & = \ket{\phi_+} \ket{\boldsymbol{0}} + \ket{\phi_-} \left( \bigotimes_{m=1}^M \hat{D}_{\alpha_m} \right) \hat{U}_L \left( \bigotimes_{l=1}^M \hat{S}_{r_l} \right) \ket{\boldsymbol{0}}
    \end{aligned}
\end{equation}
or,
\begin{equation} \label{eq:LambdaApp}
    \ket{\Lambda} = \ket{\phi_+} \ket{\boldsymbol{0}} + \ket{\phi_-} \ket{\Psi_G},
\end{equation}
where $\ket{\Psi_G}$ is the pure Gaussian state defined in Eq.~\eqref{eq:GaussianPureapp}.~\\~\\
$\boldsymbol{(ii)}$~\textbf{Applying the unitary $\hat{V}$.} Since the unitary $\hat{V}$ does not affect a product of vacua, applying it on the $M$ main modes gives
\begin{equation}
    \hat{V} \ket{\Lambda} = \ket{\phi_+} \ket{\boldsymbol{0}} + \ket{\phi_-} \hat{V} \ket{\Psi_G}.
\end{equation}
$\boldsymbol{(iii)}$~\textbf{Measuring a probability.} By performing a Gaussian measurement, we can measure the outcome probability
\begin{equation}
    \begin{aligned}
         |\bra{\alpha} \bra{\Psi'_G} \hat{V} \ket{\Lambda}|^2 = | \bk{\alpha}{\phi_+} \bk{\Psi'_G}{\boldsymbol{0}} + \bk{\alpha}{\phi_-} \bra{\Psi'_G} \hat{V} \ket{\Psi_G}|^2.
    \end{aligned}
\end{equation}
Now,
\begin{equation}
    \bk{\alpha}{\phi_+} = e^{- |\alpha|^2} \sum_{j=0}^{\infty} \frac{|\alpha|^{4j}}{(2j)!} = e^{- |\alpha|^2} \cosh(|\alpha|^2),
\end{equation}
and
\begin{equation}
    \bk{\alpha}{\phi_-} = e^{- |\alpha|^2} \sum_{j=0}^{\infty} \frac{|\alpha|^{2(2j+1)}}{(2j+1)!} = e^{- |\alpha|^2} \sinh(|\alpha|^2),
\end{equation}
so that
\begin{equation}
    \bra{\alpha} \bra{\Psi'_G} \hat{V} \ket{\Lambda} = e^{- |\alpha|^2} \left( \cosh(|\alpha|^2) \bk{\Psi'_G}{\boldsymbol{0}} + \sinh(|\alpha|^2) \bra{\Psi'_G} \hat{V} \ket{\Psi_G} \right).
\end{equation}
Finally, we can express the outcome probability as
\begin{equation}\label{eq:final_probability_appCVHad}
    \begin{aligned}
        |\bra{\alpha} \bra{\Psi'_G} \hat{V} \ket{\Lambda}|^2 & = e^{-2|\alpha|^2} \Big( \cosh^2(|\alpha|^2) \left| \bk{\Psi'_G}{\boldsymbol{0}} \right|^2 + \sinh^2(|\alpha|^2) \left| \bra{\Psi'_G} \hat{V} \ket{\Psi_G} \right|^2 \\
        & \qquad \qquad \qquad + 2 \cosh(|\alpha|^2) \sinh(|\alpha|^2) \, \mathrm{Re}\left[ \bk{\boldsymbol{0}}{\Psi'_G} \bra{\Psi'_G} \hat{V} \ket{\Psi_G} \right]  \Big).
    \end{aligned}
\end{equation}
This equation defines the constants $c_1$, $c_2$ and $c_3$ from Eq.~(20) of the main text, which are functions of the parameter $\alpha$. 

We note that we could have post-selected the ancilla on the coherent state $\ket{\alpha}$ rotated by an angle $\pi/2$ instead, which would have given
\begin{equation}
    \bra{\alpha} e^{i \frac{\pi}{2} \hat{n}_0} \ket{\phi_+} = e^{- |\alpha|^2} \sum_{j=0}^{\infty} (-1)^j \frac{|\alpha|^{4j}}{(2j)!} = e^{- |\alpha|^2} \cos(|\alpha|^2),
\end{equation}
and
\begin{equation}
    \bra{\alpha} e^{i \frac{\pi}{2} \hat{n}_0} \ket{\phi_-} = i e^{- |\alpha|^2} \sum_{j=0}^{\infty} (-1)^j \frac{|\alpha|^{2(2j+1)}}{(2j+1)!} = i e^{- |\alpha|^2} \sin(|\alpha|^2),
\end{equation}
so that
\begin{equation}
    \bra{\alpha} e^{i \frac{\pi}{2} \hat{n}_0} \bra{\Psi'_G} \hat{V} \ket{\Lambda} = e^{- |\alpha|^2} \left( \cos(|\alpha|^2) \bk{\Psi'_G}{\boldsymbol{0}} + i \sin(|\alpha|^2) \bra{\Psi'_G} \hat{V} \ket{\Psi_G} \right),
\end{equation}
and
\begin{equation}
    \begin{aligned}
        |\bra{\alpha} e^{i \frac{\pi}{2} \hat{n}_0} \bra{\Psi'_G} \hat{V} \ket{\Lambda}|^2 & = e^{-2|\alpha|^2} \Big( \cos^2(|\alpha|^2) \left| \bk{\Psi'_G}{\boldsymbol{0}} \right|^2 + \sin^2(|\alpha|^2) \left| \bra{\Psi'_G} \hat{V} \ket{\Psi_G} \right|^2 \\
        & \qquad \qquad \qquad - 2 \cos(|\alpha|^2) \sin(|\alpha|^2) \, \mathrm{Im}\left[ \bk{\boldsymbol{0}}{\Psi'_G} \bra{\Psi'_G} \hat{V} \ket{\Psi_G} \right]  \Big).
    \end{aligned}
\end{equation}
To summarize, by defining $c=\cos(|\alpha|^2)$, $s=\sin(|\alpha|^2)$, $ch=\cosh(|\alpha|^2)$ and $sh=\sinh(|\alpha|^2)$, we end up with
\begin{equation} \label{eq:c1c2c3}
    \mathrm{Re}\left[ \bk{\boldsymbol{0}}{\Psi'_G} \bra{\Psi'_G} \hat{V} \ket{\Psi_G} \right] = \frac{e^{2|\alpha|^2}}{2 \, ch \, sh} |\bra{\alpha} \bra{\Psi'_G} \hat{V} \ket{\Lambda}|^2 - \frac{ch}{2 \, sh} \left| \bk{\Psi'_G}{\boldsymbol{0}} \right|^2 - \frac{sh}{2 \, ch} \left| \bra{\Psi'_G} \hat{V} \ket{\Psi_G} \right|^2,
\end{equation}
and
\begin{equation}
        \mathrm{Im}\left[ \bk{\boldsymbol{0}}{\Psi'_G} \bra{\Psi'_G} \hat{V} \ket{\Psi_G} \right] = \frac{c}{2 \, s} \left| \bk{\Psi'_G}{\boldsymbol{0}} \right|^2 + \frac{s}{2 \, c} \left| \bra{\Psi'_G} \hat{V} \ket{\Psi_G} \right|^2 - \frac{e^{2|\alpha|^2}}{2 \, c \, s} |\bra{\alpha} e^{i \frac{\pi}{2} \hat{n}_0} \bra{\Psi'_G} \hat{V} \ket{\Lambda}|^2.
\end{equation}
Since we assume we know the state $\ket{\Psi'_G}$, the overlap $\bk{\boldsymbol{0}}{\Psi'_G}$ can be computed beforehand. Also, we assume that the probability $|\bra{\Psi'_G} \hat{V} \ket{\Psi_G}|^2$ is estimated beforehand. Thus, the last two equations allows us to retrieve  the complex-value amplitude $\bra{\Psi'_G} \hat{V} \ket{\Psi_G}$ from the probability measured at step (iii) of the proposed CV Hadamard test.

\end{document}